\documentclass[11pt]{article}
\usepackage{amssymb}
\usepackage{enumerate}
\usepackage{enumitem}
\usepackage{amsmath,amsthm,marvosym,wasysym,mathdots}
\usepackage{bbm}
\usepackage{natbib}
\usepackage{color}
\usepackage{setspace}
\usepackage{epsfig}
\usepackage{hyperref}
\hypersetup{%
  colorlinks=false,
  pdfborder = {0 0 0.5 [3 0]}
}
\usepackage{breakurl} 
\usepackage{multirow}
\usepackage[left=1in,right=1in,top=1.1in,bottom=1in]{geometry}

\newcommand{\indic}[1]{1\hspace{-2.1mm}{1}_{\{#1\}}} 

\setcounter{MaxMatrixCols}{10}
\DeclareMathOperator*{\esssup}{ess\,sup}

 %
 %


\newtheorem{theorem}{Theorem}
\newtheorem*{theorem*}{Theorem}

\newtheorem{corollary}[theorem]{Corollary}

\newtheorem{definition}[theorem]{Definition}

\newtheorem{proposition}[theorem]{Proposition}
\newtheorem{remark}[theorem]{Remark}

\def\N{{\mathbb N}} 
\def\Q{{\mathbb Q}} 
\def\R{{\mathbb R}} 
\def\P{{\mathbb P}} 
\newcommand{\EE}{{\mathord{I\kern -.33em E}}}
\def\E{{\mathbb E}} 
\def\Fil{{\mathbb F}} 

\def\1{1{\hskip -3.3 pt}\hbox{I}}
\def\F{{\mathcal F}}

\providecommand{\varitem}{}
\makeatletter

\makeatother

\numberwithin{equation}{section}
\numberwithin{theorem}{section}

\begin{document}

\title{Optimal Derivative Liquidation Timing Under Path-Dependent Risk Penalties}
\author{Tim Leung\thanks{ \small{Corresponding author. Industrial Engineering \& Operations Research Department,
Columbia University, New York, NY 10027. Tel: (212) 854-2942. E-mail:
\mbox{leung@\,ieor.columbia.edu}.} } \and Yoshihiro Shirai\thanks{\small{Industrial
Engineering \& Operations Research Department, Columbia University,
New York, NY 10027. Tel: (212) 854-2942. E-mail:
\mbox{yoshihiroshirai@gmail.com}. }} }\date{\today} \maketitle

\begin{small}

\begin{abstract}This paper studies the risk-adjusted optimal timing to liquidate an option at the prevailing market price. In addition to maximizing the expected discounted return from option sale, we incorporate a path-dependent risk penalty based on shortfall or quadratic variation of the option price  up to the liquidation time. We establish the conditions under which it is optimal to immediately liquidate or hold the option position through expiration. Furthermore, we study the variational inequality associated with the optimal stopping problem, and prove the existence and uniqueness of a strong solution. A series of analytical and numerical results  are provided to illustrate the non-trivial optimal liquidation strategies under   geometric Brownian motion (GBM) and exponential Ornstein-Uhlenbeck models. We examine the combined effects of price dynamics and risk penalty on the sell and delay regions for  various options.  In addition, we obtain an explicit closed-form solution  for the liquidation of a stock with quadratic penalty under the GBM model.   
\end{abstract}

\tableofcontents

\end{small}

\section{Introduction}
For decades, options have been widely used as a tool for investment and risk management. As of 2012, the daily market notional for S\&P 500 options   is about US\$90 billion and the average daily volume has grown rapidly from 119,808 in 2002 to 839,108 as of Jan 2013\footnote{See \url{http://www.cboe.com/micro/spx/introduction.aspx}}. Empirical studies on options returns often assume that the options  are held to maturity (see \cite{Broadie2009} and   references therein). For every liquidly traded option, there is an embedded timing flexibility to  liquidate the position through the market prior to expiry.   Hence, an important question  for effective risk management is: when is the best time to sell an option? In this paper, we propose a risk-adjusted optimal stopping framework to address this problem for a variety of options under different underlying price dynamics.

In addition to  maximizing the expected discounted market value to be received from option sale, we incorporate a risk penalty that accounts for adverse price movements till the liquidation time. For every candidate strategy, we measure the associated risk by integrating  over time the realized shortfall, or more generally its transformation in terms of a loss function, of the option position. As such, our  integrated shortfall risk penalty is path dependent and introduces the trade-off between risk and return for every liquidation timing strategy.   

Under a general diffusion model for the underlying stock price, we formulate an optimal stopping problem that includes an integral penalization term. To this end, we define and apply the concept of optimal liquidation premium which represents the additional value from optimally waiting to sell, as opposed to immediate liquidation. As it turns out, it is optimal for the option holder to sell as soon as this premium vanishes. This observation leads to  a number of useful mathematical characterizations and financial interpretations of the optimal liquidation strategies for various positions. 

We first identify the conditions under which it is optimal to immediately liquidate or hold the option position through expiration. The investigation of the non-trivial liquidation strategies involves the analytical and numerical studies of the inhomogeneous  variational inequality associated with the optimal stopping problem. In a related work,  \cite{budhi12}  examines  the solution structure for a   finite maturity optimal stopping problem under L\'{e}vy processes with a running cost and other features, and an inhomogeneous  variational inequality also arises from the associated partial integro-differential free-boundary problem. In the context of asset management,   \cite{egamiasset} study a perpetual optimal stopping problem with a running cash flow generated from dividend and
coupon payments, and they solve a time-independent inhomogeneous variational inequality.  For the  variational inequalities in our liquidation problems, we   prove the existence and uniqueness of a strong solution \`a la   \cite{Bensoussan78} (see Section  \ref{Sect-ExUn} below)  under general conditions applicable to  both geometric Brownian motion (GBM) (see \cite{Merton1973}) and exponential Ornstein-Uhlenbeck (OU) (see \cite{Ornstein1930}) models for the underlying dynamics. We also provide some mathematical characterizations and numerical examples of the optimal liquidation strategies for stocks,  calls, puts, and  straddles.

The incorporation of the risk penalty gives rise to optimal liquidation strategies that are distinctly different from the unpenalized case.  For instance, if the option's Delta (derivative of the option price with respect to the underlying price) is of the same constant sign as the excess return of the underlying, then it is optimal to hold the option till maturity when there is no risk penalty
(see Prop. \ref{trivial_timing}). This applies to the case of a call option (resp. a put option) if the investor is bullish (resp. bearish) on the stock. However, under  risk penalization the investor  may  find it optimal  to liquidate a call (resp. a put) early even in the bullish (resp. bearish) scenario (see e.g.  Prop. \ref{Put4}). Furthermore, the shape of the optimal liquidation region depends significantly  on the risk penalty.  We show that  higher risk penalization coefficient always reduces the delay  region, which intuitively means that the investor is more likely to sell earlier. Moreover, in some cases the optimal delay and sell regions can exhibit some interesting  structures, such as  disconnectedness (see Figures \ref{Put_SF} and \ref{Call_OU}). These analytical results are significantly facilitated by  the properties of the optimal liquidation premium (see Theorems \ref{prop1_L_G} and \ref{compactsupport}).

Our path-dependent risk penalization model can also be viewed as an alternative way to incorporate the investor's risk sensitivity in option  liquidation/exercise timing problems, as compared to   the utility maximization/indifference pricing approach \citep{henderson2011optimal, LeungLudkovski2, LSZ12}. On the other hand,    \cite{LeungLudkovski2011} investigate  the optimal timing to buy equity European and American options without risk penalty under incomplete markets, where the investor is assumed to select risk-neutral pricing measure  different from the market's.   \cite{LeungLiubookchap} also  discuss the timing to sell an option under the GBM model without any risk penalty, which is a special example of  our  model.

As is well known, the  concept of risk  measures based on  shortfall risk has been applied to many  portfolio optimization problems; see  \cite{Artzner,Rockafellar, FS02, FScB, sircarRM},  and references therein.  Our model applies  this idea to options trading  as a path-penalty  associated with each  liquidation strategy.  As a variation of the shortfall we also introduce a risk penalty based on the quadratic variation of option price process. In particular, we obtain an explicit closed-form solution for the liquidation of a stock with quadratic penalty under the GBM model (see Theorem \ref{perpL}). Through examining the optimal liquidation premium, we also compare the liquidation strategies for calls and puts under the shortfall-based and quadratic  risk penalties.    \cite{Forsyth2012} also adopt the mean-quadratic-variation as a criterion for determining the optimal stock trading strategy in the presence of price impact.  

The recent paper by \cite{ziemba}  considers a discrete-time portfolio optimization problem with a convex loss function that accounts for the shortfall  of  the wealth  trajectory from a benchmark.  While we consider the problem of optimal liquidation of stocks and options, their investigation  focuses on  the optimal capital growth or Kelly strategy. On the other hand, \cite{frei13}  study the optimal liquidation of a stock position subject to temporary price impact. Specifically, they  minimize  the mean and variance of the order slippage with respect to   the VWAP
(volume weighted average price) as the benchmark. These papers adopt the stochastic control approach to solve for the optimal position over time, whereas our problems concern only the optimal timing to liquidate.

The rest of the paper is organized as follows. In Section \ref{sec:ProbForm}, we formulate the optimal liquidation problem for a generic European claim  in a   diffusion market. In subsequent sections, we focus on the liquidation of a stock or an option  under the GBM and exponential OU models. In Sections \ref{sect-GBM} and \ref{sect-OU}, we study the optimal liquidation timing with  a shortfall risk penalty. In Section \ref{sect:quadr}, we conduct our analysis with a  quadratic variation risk penalty. Section \ref{sect:concl} concludes the paper. In Section \ref{Sect-ExUn},  we    discuss    the existence of a strong solution to the variational inequality as well as the probabilistic representation satisfied by the optimal liquidation premium.

\section{Problem Overview}\label{sec:ProbForm}
In the background, we fix a  probability space $(\Omega, \F, \P)$, where $\P$ is the historical probability measure. The market consists of a risky asset  $S$ and a money market account with a constant positive interest rate $r$. The risky asset price is modeled by a positive diffusion process following the stochastic differential equation
\begin{equation}\label{SDE_S}
dS_t =\mu(t,S_t)S_tdt + \sigma(t,S_t)S_tdW_t, \qquad S_0=s,
\end{equation}
where  $W$ is a standard Brownian motion under measure $\P$ and $s>0$. Here, the deterministic  coefficients $\mu(t,s)$ and $\sigma(t,s)$ are assumed to satisfy standard Lipschitz and growth conditions \cite[\S 5.2]{KaratzasShreve91} to ensure a unique strong solution to \eqref{SDE_S}. We let    $\mathbb{F}=(\F_t)_{t\geq 0}$ be the filtration generated by the Brownian motion $W$.

Let us consider a market-traded European option with  payoff $h(S_T)$ on expiration date $T$ written on the underlying asset $S$. If the Sharpe ratio $\lambda(t,s) := \frac{\mu(t,s) -r}{\sigma(t,s)}$ satisfies the Novikov condition: $\E\{ \exp (\int_{0}^T\frac{1}{2}{\lambda^2(u,S_u)}\,du ) \}<\infty$, the density process
\begin{align}\label{dqdp}
\frac{d\Q}{d\P}\bigg|_{\F_t} = \exp\left({-\frac{1}{2}\int_{0}^t{\lambda^2(u,S_u)}\,du  + \int_{0}^t{\lambda(u,S_u)}\,dW_u }\right), \qquad 0\le t\le T,
\end{align}
is a $(\P,\mathbb{F})-$martingale  (see \cite{KaratzasShreve91}, Prop. 3.5.12).  This defines a unique equivalent martingale (risk-neutral) measure  $\Q$, and the market price of the option is given by
\begin{align}
V(t,s) = \widetilde{\E}_{t,s}\left\{e^{-r(T-t)} h(S_T)\right\}, \qquad (t,s) \in [0,T]\times \R^+.\label{mkt_pr}
\end{align}
The shorthand notation $\widetilde{\E}_{t,s}\{\cdot\} \equiv \widetilde{\E}\{\cdot|S_t = s\}$  denotes the conditional expectation under  $\Q$. Note that the market price function  $V(t,s)$ does not depend on the drift function $\mu (t,s)$.

Observing the stock and option price movements over time, the  investor  has the timing  flexibility to  sell the option before expiry. While seeking to   maximize the expected discounted market value of the option,  we  incorporate a risk penalty that accounts for the downside risk up to the liquidation time.  Specifically, we  define  the shortfall at time $t$   by
\begin{equation} \label{shortfall}
\ell(t,S_t)=(m-V(t,S_t))^+,
\end{equation}
where $m>0$ is a constant    benchmark  set by the investor.  Then, the  risk penalty is modeled as a \emph{loss function} of the shortfall, denoted by $\psi(\ell(t,S_t))$. Here, the loss function $\psi:\R^+ \to \R$ is assumed to be increasing, convex, continuously differentiable, with $\psi(0)=0$ (see e.g. \citep[Chap 4.9]{FScB}). As a result, the investor faces the penalized optimal stopping problem
\begin{equation} \label{V_prob}\\
J^{\alpha}(t,s) = \sup_{\tau \in \mathcal{T}_{t,T}}\E_{t,s}\left\{e^{-r(\tau-t)}V(\tau,S_\tau)-\alpha \int_t^{\tau} e^{-r(u-t)} \psi\left((m-V(t,S_t))^+\right)du\right\},
\end{equation}
where $\alpha\ge 0$ is a penalization coefficient and $\mathcal{T}_{t,T}$ is the set of $\mathbb F$-stopping times taking values in $[t,T]$.  

Unless otherwise noted, our analysis applies to a general loss function  $\psi$ satisfying the conditions above. Here, let us  give an example to visualize the penalization mechanism. For instance, one can set the benchmark to be the initial  option price, and take $\psi(\ell) = \ell$. Then, the penalty term amounts to accumulating the (discounted) area when the option is below its initial cost. We illustrate this in Figure \ref{realized_shortfall}. Notice that the realized shortfall  stays flat when the option price is above the benchmark, and continues to increase as long as the option is under water.  Other viable specifications include the power penalty $\psi(\ell)=\ell^p$, $p\ge 1$, and the exponential penalty $\psi(\ell)=\exp(\gamma \ell)-1$, $\gamma >0$, and more.

\begin{figure}[!ht]
\centering
\includegraphics[scale=0.65]{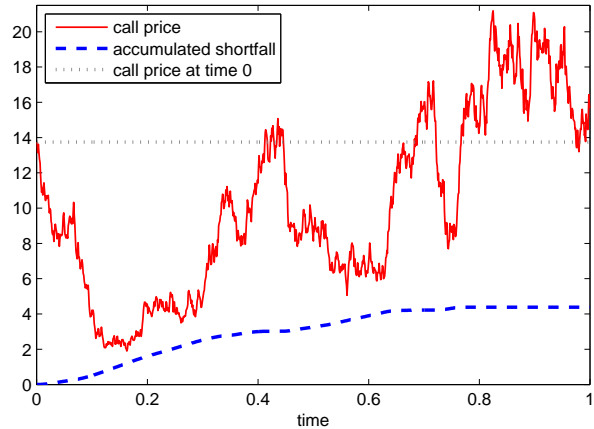}
\caption{\small{The realized shortfall (dashed) based on a simulated price path (solid) of a European call option under the GBM model, with parameters  $S_0=100$, $r=0.03$, $\mu=-0.05$ and $\sigma=0.3$,   $K=100$,   $T=1$, $\alpha =1$. The benchmark $m$ is the initial  call option price.  }}
\label{realized_shortfall}
\end{figure}

In order to quantify the value of optimal waiting, we define the optimal liquidation premium by the difference between the value function $J^{\alpha}$ and the current market price of the option, namely,
\begin{equation}
L^{\alpha}(t,s):= J^{\alpha}(t,s)-V(t,s). \label{delayed_liquidation}
\end{equation}
Alternatively, the optimal liquidation premium $L^{\alpha}$ can be interpreted as the risk-adjusted expected return from a simple buy-now-sell-later strategy.

Denote the discounted penalized liquidation value process by
\begin{align*}
Y_u=e^{-ru}V(u,S_u)-\alpha \int_0^{u} e^{-rt} \psi((m-V(t,S_t))^+)dt.
\end{align*}
In  order to guarantee the existence of an optimal stopping time to problem \eqref{V_prob}, we require that $\E\{\sup_{0\leq u \leq T}Y_u\}<\infty$. For a European call option, the option value $V(t,S_t)$ is dominated by the stock price $S_t$, while the put option price is bounded by the strike price. Consequently, for any linear combination of calls and puts, it suffices to impose $\E\{\sup_{0\leq u \leq T}S_u\}<\infty$. We also require that $\P\{\min_{0\le t\le \hat{t}} S_t>0\}= 1$, which means that the asset price stays strictly positive before  any finite time $\hat{t}$ a.s.  Then by standard optimal stopping theory \citep[Theorem D.12]{KaratzasShreve01}, the optimal liquidation time, associated with $L(t,s)$, is given by
\begin{equation}
\tau^* = \inf\{ \,u \in [t,T]\,:\, L^{\alpha}(u,S_u) =0\, \}.\label{Ltau}
\end{equation}
In other words, it is optimal for the investor to sell the option as soon as the optimal liquidation premium $L^{\alpha}$ vanishes, meaning that the timing flexibility has no value. Accordingly, the investor's optimal liquidation strategy can be described by the sell region $\mathcal S$ and delay region $\mathcal D$, namely,
\begin{align}
\mathcal S&=\{(t, s)\in [0,T]\times \R^+ : \ L^{\alpha}(t,s)=0\},\\
\mathcal D  &=\{(t, s)\in [0,T]\times \R^+ : \ L^{\alpha}(t,s)>0\}.
\end{align}

Our framework can be readily applied to the \emph{reverse} problem of optimally timing to buy an option. This amounts to changing the $\sup$ to $\inf$ in $L^\alpha$. In this paper, we shall focus on the liquidation problem.

\subsection{Analysis of the Optimal Liquidation Premium}\label{sect-ana}
\begin{theorem}\label{prop1_L_G}
Given the underlying price dynamics in \eqref{SDE_S}, the optimal liquidation premium admits the probabilistic representation
\begin{align}
L^{\alpha}(t,s)
&=\sup_{\tau \in \mathcal{T}_{t,T}} \E_{t,s} \left\{ \int_t^\tau e^{-r(u-t)} G^{\alpha}(u,S_u) \,du\right\},\label{L_G}
\end{align}
where  we denote
\begin{align}
G^{\alpha}(t,s):=\big( \mu(t, s) - r\big)sV_s(t,s)-\alpha \psi\left((m-V(t,s))^+\right). \label{Gdrift}
\end{align}
\end{theorem}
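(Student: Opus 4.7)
The plan is to derive the identity by starting from the definition $L^{\alpha}=J^{\alpha}-V$ and rewriting the discounted terminal option value $e^{-r(\tau-t)}V(\tau,S_\tau)$ via Itô's formula, exploiting the pricing PDE satisfied by $V$. Since $V$ is defined by the risk-neutral expectation \eqref{mkt_pr}, it satisfies the Black--Scholes-type equation $V_t + \tfrac{1}{2}\sigma^2(t,s)s^2 V_{ss} + r s V_s - rV = 0$ on $[0,T)\times\R^+$ with terminal data $V(T,s)=h(s)$, where the coefficients entering are those of the risk-neutral dynamics. Under the historical measure $\P$, however, $S$ has drift $\mu$, so Itô's formula produces an extra drift term proportional to $(\mu-r)sV_s$.

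More precisely, first I would apply Itô's formula to $e^{-r(u-t)}V(u,S_u)$ under $\P$ on $[t,\tau]$ for any $\tau\in\mathcal T_{t,T}$. After substituting the pricing PDE to cancel the $V_t+\tfrac{1}{2}\sigma^2 s^2 V_{ss}-rV$ terms with $-rsV_s$, what remains is
\begin{equation*}
e^{-r(\tau-t)}V(\tau,S_\tau)=V(t,s)+\int_t^\tau e^{-r(u-t)}(\mu(u,S_u)-r)S_u V_s(u,S_u)\,du+\int_t^\tau e^{-r(u-t)}\sigma(u,S_u)S_u V_s(u,S_u)\,dW_u.
\end{equation*}
Taking $\E_{t,s}$ and assuming the stochastic integral has zero expectation (justified below), one obtains an expression for $\E_{t,s}\{e^{-r(\tau-t)}V(\tau,S_\tau)\}$ which, when plugged into the definition \eqref{V_prob} of $J^{\alpha}$ and combined with the running penalty, yields
\begin{equation*}
J^{\alpha}(t,s)=V(t,s)+\sup_{\tau\in\mathcal T_{t,T}}\E_{t,s}\left\{\int_t^{\tau}e^{-r(u-t)}\bigl[(\mu(u,S_u)-r)S_uV_s(u,S_u)-\alpha\psi((m-V(u,S_u))^+)\bigr]\,du\right\},
\end{equation*}
which is exactly \eqref{L_G} after subtracting $V(t,s)$ and recognizing $G^{\alpha}$.

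The main obstacle is the technical justification. Two issues must be handled. First, Itô's formula requires $V\in C^{1,2}([0,T)\times\R^+)$; under the Lipschitz/growth conditions on $\mu,\sigma$ already assumed after \eqref{SDE_S}, and for payoffs $h$ that are Lipschitz or the standard call/put/straddle payoffs considered in the paper, interior smoothness of $V$ follows from classical parabolic theory, while the behavior near $u=T$ can be addressed by applying Itô on $[t,\tau\wedge(T-\varepsilon)]$ and then sending $\varepsilon\downarrow 0$ using the integrability assumption $\E\{\sup_{0\le u\le T}Y_u\}<\infty$ made just before \eqref{Ltau} together with dominated convergence. Second, the vanishing of the stochastic integral's expectation is obtained by localizing with a sequence of stopping times $\tau_n\uparrow\tau$ that bound $S$ (using the positivity $\P\{\min_{0\le t\le\hat t}S_t>0\}=1$ and the growth conditions on $\sigma,V_s$ to secure the $L^2$-bound on the integrand on $[t,\tau_n]$), taking expectations to kill the martingale, and then passing to the limit.

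Finally, since the manipulation is valid for every $\tau\in\mathcal T_{t,T}$ with $V(t,s)$ being deterministic and independent of $\tau$, the supremum passes through the constant and the identity \eqref{L_G} follows. The representation then immediately shows that $L^{\alpha}$ is nonnegative (take $\tau=t$) and that its sign is governed by the drift $G^{\alpha}$, which will be the engine for the subsequent analytical characterizations such as Proposition \ref{trivial_timing} and the structural results in Theorems \ref{prop1_L_G} and \ref{compactsupport}.
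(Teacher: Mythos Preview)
Your proposal is correct and follows essentially the same route as the paper: apply It\^o's formula to $e^{-r(u-t)}V(u,S_u)$ under $\P$, use the pricing PDE to reduce the drift to $(\mu-r)sV_s$, take expectations so the martingale term drops, and substitute into the definition of $L^{\alpha}=J^{\alpha}-V$. The paper's proof is terse and omits the technical justifications (regularity of $V$, localization for the stochastic integral) that you spell out, but the underlying argument is identical.
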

\begin{proof} Applying Ito's formula to the market price in \eqref{mkt_pr}, we get
\[\E_{t,s} \left\{e^{-r(\tau-t)}V(\tau,S_\tau)   \right\} -V(t,s) = \E_{t,s}  \left\{ \int_t^\tau e^{-r(u-t)} \big( \mu(u, S_u)  - r\big)S_uV_s(u,S_u) du\right\}.\]
Substituting this into the optimal liquidation premium in \eqref{delayed_liquidation} gives
\begin{equation*}
L^{\alpha}(t,s) = \sup_{\tau \in \mathcal{T}_{t,T}}~\E_{t,s} \bigg\{ \int_t^\tau e^{-r(u-t)}
					\underbrace{\big[\big(\mu(u, S_u)-r\big)S_uV_s(u,S_u)-\alpha \psi((m-V(u,S_u))^+)\big]}_
					{=\,G^\alpha(u,S_u)\text{ in \eqref{Gdrift}}}\,du\bigg\}.
\end{equation*} \end{proof}
We shall call $G^{\alpha}(t,s)$ in  \eqref{Gdrift}  the \emph{drive function}. We observe that it depends on the Delta $V_s \equiv \frac{\partial V}{\partial s}$ of the option and the penalty coefficient $\alpha$ reduces the drive function for every $(t,s)$. Many properties of the optimal liquidation premium $L^\alpha$ can be deduced by studying the drive function.\\

\begin{proposition} \label{trivial_timing}Let $t\in[0,T]$ be the current time.
If the drive function $G^{\alpha}(u,s)$ is positive, $\forall (u,s) \in [t,T]\times \R^+$, then it is optimal to sell at maturity, namely, $\tau^*=T$. ~
If the drive function $G^{\alpha}(u,s)$ is negative, $\forall (u,s) \in [t,T]\times \R^+$, then it is optimal to sell immediately, namely, $\tau^*=t$.
\end{proposition}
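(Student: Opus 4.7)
The plan is to leverage the probabilistic representation in Theorem \ref{prop1_L_G}, which expresses $L^{\alpha}(t,s)$ as the supremum of an integral of the drive function $G^{\alpha}$ over the remaining time. Under a sign assumption on $G^{\alpha}$, the optimization inside the supremum becomes trivial: extend the stopping time as much as possible when $G^{\alpha}>0$, and stop instantly when $G^{\alpha}<0$. The answer then follows by combining with the characterization of $\tau^{\ast}$ in \eqref{Ltau}.

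For the first assertion (positive drive), I would argue as follows. Fix any $u\in[t,T)$ and condition on $S_{u}$. By Theorem \ref{prop1_L_G},
\[
L^{\alpha}(u,S_{u})\;\ge\;\E_{u,S_{u}}\!\left\{\int_{u}^{T}e^{-r(v-u)}G^{\alpha}(v,S_{v})\,dv\right\},
\]
obtained by inserting the particular stopping time $\tau\equiv T$. Since $G^{\alpha}(v,s)>0$ on $[t,T]\times\R^{+}$ and $S_{v}>0$ almost surely on every finite horizon (by the assumption $\P\{\min_{0\le r\le \hat t}S_r>0\}=1$), the integrand is strictly positive a.s., so the right side is strictly positive. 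Hence $L^{\alpha}(u,S_{u})>0$ a.s.\ for every $u<T$, and by \eqref{Ltau} the infimum over $\{u\in[t,T]:L^{\alpha}(u,S_{u})=0\}$ is attained only at $u=T$ (where $L^{\alpha}(T,\cdot)=0$ by construction, since $J^{\alpha}(T,s)=V(T,s)$). Therefore $\tau^{\ast}=T$.

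For the second assertion (negative drive), note first the universal lower bound $L^{\alpha}(t,s)\ge 0$, obtained by choosing $\tau=t$ in the supremum \eqref{L_G}. On the other hand, if $G^{\alpha}(u,s)<0$ on $[t,T]\times\R^{+}$, then for every $\tau\in\mathcal{T}_{t,T}$ the quantity inside the expectation in \eqref{L_G} is $\le 0$, so $L^{\alpha}(t,s)\le 0$. Combining gives $L^{\alpha}(t,s)=0$, so that $t$ belongs to the zero set defining $\tau^{\ast}$ in \eqref{Ltau}, and hence $\tau^{\ast}=t$.

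I don't anticipate a serious obstacle; the main thing to be careful about is the strictness in the first case, namely showing $L^{\alpha}(u,S_{u})>0$ (not merely $\ge 0$) for every $u<T$ so as to exclude premature stopping. This is handled by noting that the integral $\int_{u}^{T}e^{-r(v-u)}G^{\alpha}(v,S_{v})\,dv$ is strictly positive pathwise under the pointwise strict positivity of $G^{\alpha}$ together with the positivity of the price process, and then taking expectation. Measurability and integrability are standard under the assumption $\E\{\sup_{0\le u\le T}Y_{u}\}<\infty$ imposed earlier.
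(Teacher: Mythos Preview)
Your proposal is correct and follows essentially the same approach as the paper: both use the integral representation of $L^{\alpha}$ from Theorem~\ref{prop1_L_G} and observe that a sign condition on $G^{\alpha}$ makes the choice of stopping time trivial (largest or smallest). Your write-up is more detailed than the paper's one-line argument, in particular by explicitly verifying strict positivity of $L^{\alpha}(u,S_u)$ for $u<T$ in the first case and then invoking \eqref{Ltau}, but the underlying idea is identical.
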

\begin{proof}
We observe from the integral in \eqref{L_G} that if the drive function $G^{\alpha}$ is positive (resp. negative), $\forall (u,s) \in [t,T]\times \R^+$, then we can maximize the expectation by selecting the largest (resp. smallest) stopping time, namely, $\tau^*=T$ (resp. $\tau^*=t$).
\end{proof}

In particular, if $V_s(t,s)$ and $(\mu(t,s)-r)$ are of different signs $\forall\,(t,s)$, then the drive function $G^{\alpha}$ is always negative, so it is optimal to sell immediately. Proposition \ref{trivial_timing} can also be applied to the perpetual case if we set  $T= \infty$. In general, the delay  region always contains the region where the drive function is positive, namely,  
\begin{equation} \label{GsubL}
\{G^\alpha>0\}\subset\{L^\alpha>0\},
\end{equation}see e.g. \citep[Prop. 2.3]{OksendalSulemBook}. Intuitively, this means that if $G(t, s) > 0$,  then the investor 
should not sell immediately   since an incremental  positive infinitesimal premium can be obtained   by waiting for an infinitesimally small amount of time.

In addition, we can  infer from \eqref{L_G} the ordering of optimal liquidation premium  based on the drive function.

\begin{corollary}\label{Compare_G} Consider two options $A$ and $B$, along with two penalty coefficients $\alpha_A$ and $\alpha_B$ respectively. If the drive function of $A$ dominates that of $B$, i.e. $G^{\alpha_A}_A(t,s)\geq G^{\alpha_B}_B(t,s),\forall (t,s)\in [0,T]\times \R^+$, then the optimal liquidation premium for $A$, $L^{\alpha_A}_A$, dominates that for $B$, $L^{\alpha_B}_B(t,s)$, i.e. $L^{\alpha_A}_A(t,s) \geq L^{\alpha_B}_B(t,s),\forall (t,s)\in[0,T]\times \R^+$.
\end{corollary}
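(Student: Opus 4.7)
The plan is to deduce the corollary directly from the probabilistic representation established in Theorem \ref{prop1_L_G}. Since both options are written on the same underlying $S$ with dynamics \eqref{SDE_S}, the only objects that differ across $A$ and $B$ are the market price functions $V_A, V_B$ and the penalty levels $\alpha_A, \alpha_B$, which are packaged into the two drive functions $G^{\alpha_A}_A$ and $G^{\alpha_B}_B$. The expectation $\E_{t,s}$ and the family $\mathcal{T}_{t,T}$ of admissible stopping times are common to both problems, so the argument reduces to monotonicity of a sup-integral functional with respect to its integrand.

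First, I would fix an arbitrary $\tau \in \mathcal{T}_{t,T}$ and observe that, because $e^{-r(u-t)} > 0$ and the hypothesis gives $G^{\alpha_A}_A(u,s) \geq G^{\alpha_B}_B(u,s)$ pointwise on $[0,T]\times \R^+$, the integrands satisfy
\[
e^{-r(u-t)}\,G^{\alpha_A}_A(u,S_u) \;\geq\; e^{-r(u-t)}\,G^{\alpha_B}_B(u,S_u), \qquad u\in [t,\tau],
\]
$\P$-a.s.\ along every trajectory of $S$ started at $(t,s)$. Integrating in $u$ from $t$ to $\tau$ preserves the inequality, and then monotonicity of the expectation under $\P$ yields
\[
\E_{t,s}\!\left\{\int_t^\tau e^{-r(u-t)} G^{\alpha_A}_A(u,S_u)\,du\right\} \;\geq\; \E_{t,s}\!\left\{\int_t^\tau e^{-r(u-t)} G^{\alpha_B}_B(u,S_u)\,du\right\}.
\]

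Finally, since the inequality holds for every $\tau \in \mathcal{T}_{t,T}$, one takes the supremum over $\tau$ on both sides. On the left, this is bounded above by $L^{\alpha_A}_A(t,s)$; on the right, using the representation \eqref{L_G} for option $B$, the supremum equals exactly $L^{\alpha_B}_B(t,s)$. Combining these yields $L^{\alpha_A}_A(t,s) \geq L^{\alpha_B}_B(t,s)$ for all $(t,s) \in [0,T]\times \R^+$, as claimed. There is no real obstacle here: the entire content of the argument is the pointwise-to-sup monotonicity, and all the analytic work — existence of the optimizer, integrability of $Y_u$, the Itô computation that linearizes the expected discounted market value — has already been carried out in Theorem \ref{prop1_L_G}.
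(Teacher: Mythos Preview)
Your proposal is correct and matches the paper's approach: the corollary is stated without explicit proof, merely noting that the ordering ``can be inferred from \eqref{L_G},'' which is exactly the pointwise-to-sup monotonicity argument you spell out. The only minor wording slip is ``bounded above by $L^{\alpha_A}_A$'' for the left-hand supremum---it actually \emph{equals} $L^{\alpha_A}_A$ by \eqref{L_G}---but either reading yields the desired inequality.
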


The corollary allows us to compare the liquidation timing of different penalties. For example, for $0\leq \alpha_1\leq \alpha_2$, we have $G^{\alpha_1}(t,s)\geq G^{\alpha_2}(t,s)$ for the same option. It follows from \eqref{Ltau} and Corollary \ref{Compare_G} that the optimal liquidation time with penalty $\alpha_1$ is later than that with penalty $\alpha_2$.

In general, a variety of delay and sell regions can occur depending on the underlying dynamics and option payoff. Next, we give sufficient conditions so that the delay region is bounded.

\begin{theorem}\label{compactsupport}
Let $T<\infty$ and $S$ be time homogeneous. Then, the delay region is bounded provided that\\
(i) $\exists \,c>0$ s.t. $G^\alpha(t,s)<c$ for every $(t,s)\in[0,T]\times \R^+$; and\\
(ii) there exist constants $b,k>0$ such that $G^\alpha(t,s)<-b$ in $[0,T]\times[k,\infty)$.
\end{theorem}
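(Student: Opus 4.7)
The plan is to use the probabilistic representation of $L^{\alpha}$ from Theorem \ref{prop1_L_G} together with a hitting-time decomposition at level $k$. Since the $t$-component already lies in the bounded interval $[0,T]$, boundedness of the delay region reduces to producing $s^\star>k$ such that $L^{\alpha}(t,s)=0$ for all $(t,s)\in[0,T]\times[s^\star,\infty)$.

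First, I would fix $s>k$ and introduce $\sigma_k:=\inf\{u\geq t:S_u\leq k\}$. For any $\tau\in\mathcal{T}_{t,T}$, split the integral in \eqref{L_G} at $\tau\wedge\sigma_k$. Path continuity of $S$ forces $S_u>k$ on $[t,\tau\wedge\sigma_k)$, so by hypothesis (ii),
\[
\int_t^{\tau\wedge\sigma_k}e^{-r(u-t)}G^{\alpha}(u,S_u)\,du\leq -b\,e^{-rT}(\tau\wedge\sigma_k-t).
\]
On $\{\sigma_k<\tau\}$, applying the strong Markov property at $\sigma_k$ (with $S_{\sigma_k}=k$) and bounding the continuation value by $L^{\alpha}(\sigma_k,k)$, together with (i) giving $L^{\alpha}(u,k)\leq c\int_0^Te^{-rv}dv=:\bar L$, produces
\[
\E_{t,s}\!\left[\indic{\sigma_k<\tau}\int_{\sigma_k}^{\tau}e^{-r(u-t)}G^{\alpha}(u,S_u)\,du\right]\leq \bar L\,\P_{t,s}(\sigma_k\le T).
\]
Taking expectation in the displayed decomposition and then the supremum over $\tau$, I obtain the base estimate $L^{\alpha}(t,s)\le\bar L\,\P_{t,s}(\sigma_k\le T)$.

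Next, I would exploit time-homogeneity to write $\P_{t,s}(\sigma_k\le T)$ as a function of $(s,T-t)$ alone, and then use the standard continuity of solutions of \eqref{SDE_S} with respect to the initial condition (ensured by the Lipschitz and growth assumptions stated after \eqref{SDE_S}). A Gronwall argument, or equivalently a tail estimate on the running minimum of $\log S$, gives $\P_{0,s}\!\left(\min_{u\in[0,T]}S_u\le k\right)\to 0$ as $s\to\infty$, so $\sup_{t\in[0,T]}L^{\alpha}(t,s)\to 0$.

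The main obstacle is upgrading this decay into the sharper statement $L^{\alpha}(t,s)\equiv 0$ for $s\ge s^\star$, which is what boundedness of the delay region literally demands. I would refine the above estimate by keeping the non-positive pre-$\sigma_k$ term, obtaining
\[
L^{\alpha}(t,s)\leq\sup_{\tau\in\mathcal{T}_{t,T}}\!\left\{\bar L\,\P_{t,s}(\sigma_k\le\tau)-b\,e^{-rT}\,\E_{t,s}[\tau\wedge\sigma_k-t]\right\},
\]
and argue that, for $s$ sufficiently large, no choice of $\tau$ can render this bracket positive: any $\tau$ for which $\P(\sigma_k\le\tau)$ is of the order of $\P(\sigma_k\le T)$ must also force $\E[\tau\wedge\sigma_k-t]$ to be comparably large, so the trade-off between the two terms remains non-positive once $s$ is large. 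Alternatively, a supersolution argument for the variational inequality from Section \ref{Sect-ExUn}, constructing a non-negative function that vanishes on $[0,T]\times[s^\star,\infty)$ and dominates $L^{\alpha}$ (for which condition (ii) secures supersolvability on the outer region), together with a comparison principle, would give the same conclusion.
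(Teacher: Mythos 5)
Your first half is sound: the decomposition at $\sigma_k$, the bound $-b$ before the hitting time, the bound $L^{\alpha}(\sigma_k,k)\le \bar L$ on the continuation value, and the conclusion $L^{\alpha}(t,s)\le \bar L\,\P_{t,s}(\sigma_k\le T)\to 0$ are all correct. But, as you yourself flag, this only gives decay to zero, and the step you propose to upgrade it to exact vanishing is asserted rather than proved, and I do not think it closes as stated. The difficulty is that the supremum in your refined bracket $\sup_\tau\{\bar L\,\P(\sigma_k\le\tau)-be^{-rT}\E[\tau\wedge\sigma_k-t]\}$ is taken over \emph{all} stopping times, including $\tau$ arbitrarily close to $t$, for which the waiting-cost term $\E[\tau\wedge\sigma_k-t]$ is arbitrarily small; your claim that capturing a non-negligible $\P(\sigma_k\le\tau)$ "forces" a comparably large waiting cost would, at best, give $\E[(\sigma_k-t)\indic{\sigma_k\le\tau}]\ge\delta\,\P(\sigma_k\le\tau)$ for some $\delta\le T$, whence the bracket is only bounded by $\P(\sigma_k\le\tau)\,(\bar L-be^{-rT}\delta)$. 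This is non-positive only when $\delta\ge \bar L e^{rT}/b$, a condition that fails whenever $c/b$ is large — yet the theorem holds for all $c,b>0$. So the trade-off you invoke is not uniform in $\tau$, and the estimate degenerates to "$L^{\alpha}$ is small," not "$L^{\alpha}=0$." The alternative supersolution route you mention has the same status: $w\equiv 0$ is a supersolution only on $[k,\infty)$, and building a global $C^1$ supersolution that vanishes for large $s$ while dominating $L^{\alpha}$ on the inner region where $G^{\alpha}$ may reach $c$ is essentially the whole problem.

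The paper closes exactly this gap with two ideas you are missing. First, a monotonization step: replace $G^{\alpha}$ by $\widehat G^{\alpha}(s)=\max\{G^{\alpha}(t,\xi):(t,\xi)\in[0,T]\times[s,\infty)\}$, so that the dominating premium $\widehat L$ is decreasing in both $t$ and $s$; then unboundedness of the delay region forces the existence of a full strip $[0,\hat t\,]\times\R^+$ on which $\widehat L>0$. Second, rather than estimating over all $\tau$, the paper applies the two-term estimate to the \emph{optimal} $\tau^*$, which under this contradiction hypothesis satisfies $\tau^*>\hat t$ a.s.\ uniformly in the starting point $s$; this pins the waiting-cost term $1-\E_{t,s}\{e^{-r(\tau^*\wedge\tau_k-t)}\}$ away from zero while the reward term $\frac{c}{r}\E_{t,s}\{e^{-r(\tau_k-t)}\indic{\tau_k\le\tau^*}\}$ tends to zero as $s\to\infty$, yielding $L^{\alpha}(t,s)\le 0$ and the contradiction. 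If you want to salvage your write-up, the missing ingredient is precisely this a.s.\ lower bound on the optimal stopping time, which is what converts "the premium is small" into "the premium is zero."
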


\begin{proof}
\textbf{Step 1.} \textit{We find a function $\widehat{L}(t,s)$ that  dominates   $L^\alpha(t,s)$ and  is decreasing in both $t$ and $s$.} To this end, we define
\begin{align*}
\widehat{G}^\alpha(s)
	& := \max\{G^\alpha(t,\xi):(t,\xi)\in[0,T]\times[s,\infty)\},\\
\widehat{L}(t,s)
	& := \sup_{\tau \in \mathcal{T}_{t,T}} \E_{t,s}\left\{ \int_t^\tau e^{-r(u-t)}\widehat{G}^\alpha(S_u)\,du,\right\}.
\end{align*}
By construction $\widehat{G}^\alpha:[0,T]\times\R^+ \to \R$ is constant in $t$ and decreasing in $s$. It also satisfies conditions (i) and (ii). Consequently, using the time homogeneity of $S$, we have, for $t>t'$,
\begin{equation*}
\widehat{L}(t,s)
=  		\sup_{\tau \in \mathcal{T}_{0,T-t}} \E_{0,s} \left\{ \int_0^\tau e^{-ru} \widehat{G}^\alpha(S_u) \,du\right\}
\leq 	\sup_{\tau \in \mathcal{T}_{0,T-t'}} \E_{0,s}\left\{ \int_0^\tau e^{-ru} \widehat{G}^\alpha(S_u) \,du\right\}
=		\widehat{L}(t',s).
\end{equation*}
Hence,  $\widehat{L}(t,s)$ is decreasing in $t$.  Moreover, since $\widehat{G}^\alpha$ is decreasing in $s$, we have, for $s'>s$,
\begin{align*}
\widehat{L}(t,s')
& =  \sup_{\tau \in \mathcal{T}_{t,T}} \E_{t,s'} \left\{ \int_t^\tau e^{-r(u-t)} \widehat{G}^\alpha(S_u) \,du\right\}\\
& =  \sup_{\tau \in \mathcal{T}_{t,T}} \E_{t,s} \left\{ \int_t^\tau e^{-r(u-t)} \widehat{G}^\alpha(S_u+s'-s) \,du\right\}\\
& \leq  \sup_{\tau \in \mathcal{T}_{t,T}} \E_{t,s} \left\{ \int_t^\tau e^{-r(u-t)} \widehat{G}^\alpha(S_u) \,du\right\} = \widehat{L}(t,s).
\end{align*}
Therefore,  $\widehat{L}(t,s)$ is also decreasing in $s$.

Since by definition $\widehat{G}^\alpha$ dominates $G^\alpha$,  Corollary \ref{Compare_G} implies  that   $L^\alpha(t,s)$ has a bounded support as long as  $\widehat{L}(t,s)$ has a bounded support.  Henceforth,  we can  assume without loss of generality that $L^\alpha$ is decreasing in both variables $t$ and $s$ and that $G^\alpha$ is time homogeneous and decreasing in $s$. In particular, we denote  $G^\alpha(s) \equiv G^\alpha(t,s)$.

\textbf{Step 2.} \textit{We prove that for every $t>0$, there exists $\hat{s}<\infty$ such that $L^\alpha(t,s)=0$ for every $s>\hat{s}$}. Since  $L^\alpha(t,s)$ is decreasing, it is equivalent  to show that there exists  no $\hat{t}\in (0,T]$ s.t.  $L^\alpha(t,s)>0$ for $0\leq t \leq \hat{t}$ and $s\in \R^+$. To this end, let's suppose that such a time $\hat{t}$ exists.  In other words, $\tau^* = \inf\{t\le u\le T\,:\,  L(u,S_u) =0\} > \hat{t}.$

Now, we show that this leads to a contradiction. Fix $t\in[0,\hat{t})$. Condition (ii) means that there exists $k$ s.t.  $G^\alpha(s)<-b<0$ in $[k,\infty)$. For $s>k$, we let $\tau_k:=\inf\{u \ge t :S_{u}\le k\}$. Since $S$ has continuous paths, we have $\tau_k>t$. Define
\begin{align}
\mathcal{K}(t,s)
:= 	\frac{c}{r}\E_{t,s}\left\{e^{-r(\tau_k-t)}\indic{\tau_k\leq \tau^*}\right\}
  	-\E_{t,s} \left\{b\int_t^{\tau^*\wedge\tau_k}e^{-r(u-t)} \,du\right\} \notag \\
=	\frac{c}{r}\E_{t,s}\left\{e^{-r(\tau_k-t)} \indic{\tau_k\leq \tau^*}\right\}
	-b\left(1-\E_{t,s}\left\{e^{-r(\tau^*\wedge\tau_k-t)}\right\}\right), \label{proofK}
\end{align}where $c$ is the upper bound of $G^\alpha$ in condition (i). 
Next, taking  $s\uparrow\infty$ yields that  $\P_{t,s}(\tau_k\le T) \downarrow 0$, while $\E_{t,s}\left\{e^{-r(\tau^*\wedge\tau_k-t)}\right\}< e^{-r(\hat{t} -t)}$ since $\tau^*>\hat{t}>t$ a.s. Therefore,  we obtain
\begin{equation}\notag
\beta(t,s) :=\frac{c\,\P_{t,s}(\tau_k\leq \tau^*)}{r(1-\E_{t,s}\left\{e^{-r(\tau^*\wedge\tau_k-t)}\right\})}\to 0.
\end{equation}
As a result, for a sufficiently large $s>k$, we get $b \ge \beta(t,s)$, which implies that $\mathcal{K}(t,s)\leq 0$ (see \eqref{proofK}). 

Next we consider the difference
\begin{align}
L^\alpha(t,s)-\mathcal{K}(t,s)
& \leq 	\E_{t,s} \left\{\int_{\tau^*\wedge\tau_k}^{\tau^*} e^{-r(u-t)}G^\alpha(S_u) \,du
		- \frac{c}{r}e^{-r(\tau_k-t)}\indic{\tau_k\leq \tau^*}\right\}\notag \\
& \leq e^{rt}\E_{t,s}\left\{\frac{c}{r}(e^{-r\tau^*\wedge\tau_k}
	 	- e^{-r\tau^*})- \frac{c}{r}e^{-r\tau_k}\indic{\tau^*\leq \tau_k}\right\}\notag \\
& =		-\frac{ce^{rt}}{r}\E_{t,s}\left\{e^{-r\tau^*}\indic{\tau_k\leq \tau^*}\right\}\leq0.\notag
\end{align}
This means that   $L^\alpha(t,s)\leq \mathcal{K} (t,s) \leq 0$. This contradicts the assumption $L^\alpha(t,s)>0$.

\textbf{Step 3.} It remains  to show at time $0$  that  $\exists\,\hat{s}>0$ such that $L^\alpha(0,s)=0$ for every $s>\hat{s}$. Let $\hat{t}\in[0,T]$ and consider for every $t\in[0,T+\hat{t}]$ the optimal stopping problem
\begin{equation*}
\overline{L}^\alpha(t,s) := \sup_{\tau \in \mathcal{T}_{t,T+\hat{t}}} \E_{t,s} \left\{ \int_t^\tau e^{-r(u-t)} G^\alpha(S_u) \,du\right\}.
\end{equation*}
The time homogeneity of $S$ yields that $\overline{L}^\alpha(\hat{t},s)=L^\alpha(0,s)$. Now  we apply  Step 2 and conclude  that  there exists $\hat{s}>0$ such that $\overline{L}^\alpha(\hat{t},s)=0$ for every $s>\hat{s}$. Hence,  the delay region is bounded above.
\end{proof}

We remark that the statement and the proof of Theorem \ref{compactsupport}   do not involve the properties of the loss function. In other words, as long as the resulting drive function satisfies conditions (i) and (ii), the delay region is bounded.  We notice that if the delay region is bounded, then there exists a constant $\bar{s}$ such that $\{L^\alpha>0\}\subseteq[0,T]\times (0,\bar{s})$.  We will utilize  Theorem \ref{compactsupport} repeatedly when we discuss the liquidation strategies in subsequent sections.

\subsection{Applications to GBM and Exponential OU Underlyings} \label{sect:GBM_OU}
Henceforth, we shall investigate analytically and numerically the optimal liquidation timing when $S$ follows (i) the  geometric Brownian motion (GBM) model with  $\mu(t,s)=\mu$ and $\sigma(t,s)=\sigma>0$, as well as (ii) the exponential OU model with  $\mu(t,s)=\beta(\theta-\log(s))$ and $\sigma(t,s)=\sigma>0$.

We will study the liquidation timing of a stock, European put and call options. For both the GBM and exponential OU cases, the risk-neutral measure $\Q$ is uniquely defined by \eqref{dqdp}, and  the Novikov condition is satisfied (see Appendix \ref{appendix_A}). Furthermore, the $\Q$ dynamics of $S$ is a GBM with drift $r$ and the no-arbitrage prices (see \ref{mkt_pr}) of a call and a put with strike $K$ and maturity $T$ are given by
\begin{equation}
C(t,s)=s\,\Phi(d_1)-Ke^{-r(T-t)}\Phi(d_2), \quad \text{ and } \quad  P(t,s)=Ke^{-r(T-t)}\Phi(-d_2)-s \,\Phi(-d_1), \label{CP}
\end{equation}
where $\Phi$ is the standard normal c.d.f. and
\begin{equation}
d_1=\frac{\log(\frac{s}{K})+(r+\frac{\sigma^2}{2})(T-t)}{\sigma\sqrt{T-t}} , \qquad d_2=d_1-\sigma\sqrt{T-t}.  \notag
\end{equation}

In order to  numerically compute   the  non-trivial liquidation strategy, we   solve the variational inequality (VI) of the form
\begin{align}
\text{min}\bigg\{-L^{\alpha}_t-\mu(t,s)s\,L^{\alpha}_s-\frac{\sigma^2(t,s)s^2}{2}L^{\alpha}_{ss} + r L^{\alpha}-G^{\alpha}, \ L^{\alpha}\bigg\}=0, \label{L_general_VI}
\end{align}
with terminal condition $L^{\alpha}(T,s)=0$ and where $(t,s) \in [0,T) \times \mathbb{R}^+$. In Section  \ref{Sect-ExUn}, we show that the above VI admits a  unique strong solution in the terminology of  \cite{Bensoussan78} under conditions that include the GBM and exponential OU cases (see Theorem \ref{ExistenceStrongSolution}).  For  implementation, we adopt the  Crank-Nicholson scheme for the VI \eqref{L_general_VI} on a finite (discretized) grid  $D=[s_{\min},s_{\max}]\times[0,T]$.  We refer to the book by Glowinski \citep[Chap.3]{Glowinski} for details on numerical methods for solving inhomogeneous VIs of parabolic type.

\section{Optimal Liquidation with a GBM Underlying}\label{sect-GBM}
We begin our first series of illustrative examples under the GBM model. In view of Proposition \ref{trivial_timing}, we observe that, if $\mu\leq r$, it is never optimal to hold a stock or a call (or in general any positive delta position) regardless whether we introduce a risk penalty or not. On the other hand, Proposition \ref{trivial_timing} also implies that, if $\mu>r$ and $\alpha=0$, it is always optimal to delay.

 However,  with a non-zero risk penalty ($\alpha>0$), the solution can be non-trivial. To see this, we note that the drive function associated with  a call is given by $G^\alpha_{Call}(t,s)=(\mu-r)sC_s-\alpha\psi((m-C(t,s))^+)$ where $C(t,s)$ is the call price in  \eqref{CP}. In particular, the penalty term  is strictly positive at $s=0$ and  decreasing, and it vanishes for large $s$. On the other hand, the first term  $(\mu-r)sC_s$ is   strictly increasing from zero at $s=0$. This implies that there exists a price level   $\hat{s}$  such that  $G^\alpha_{Call}(t,s)$ is  positive in $[0,T]\times[\hat{s},\infty)$.  In turn, it follows from  \eqref{GsubL}  that the sell region must be  bounded (possibly empty)  and the delay region is unbounded. The same argument applies to the case with a stock. Figure \ref{Stock_Call_SF_1} illustrates this.

\begin{figure}[h!]
\centering
\includegraphics[scale=0.55]{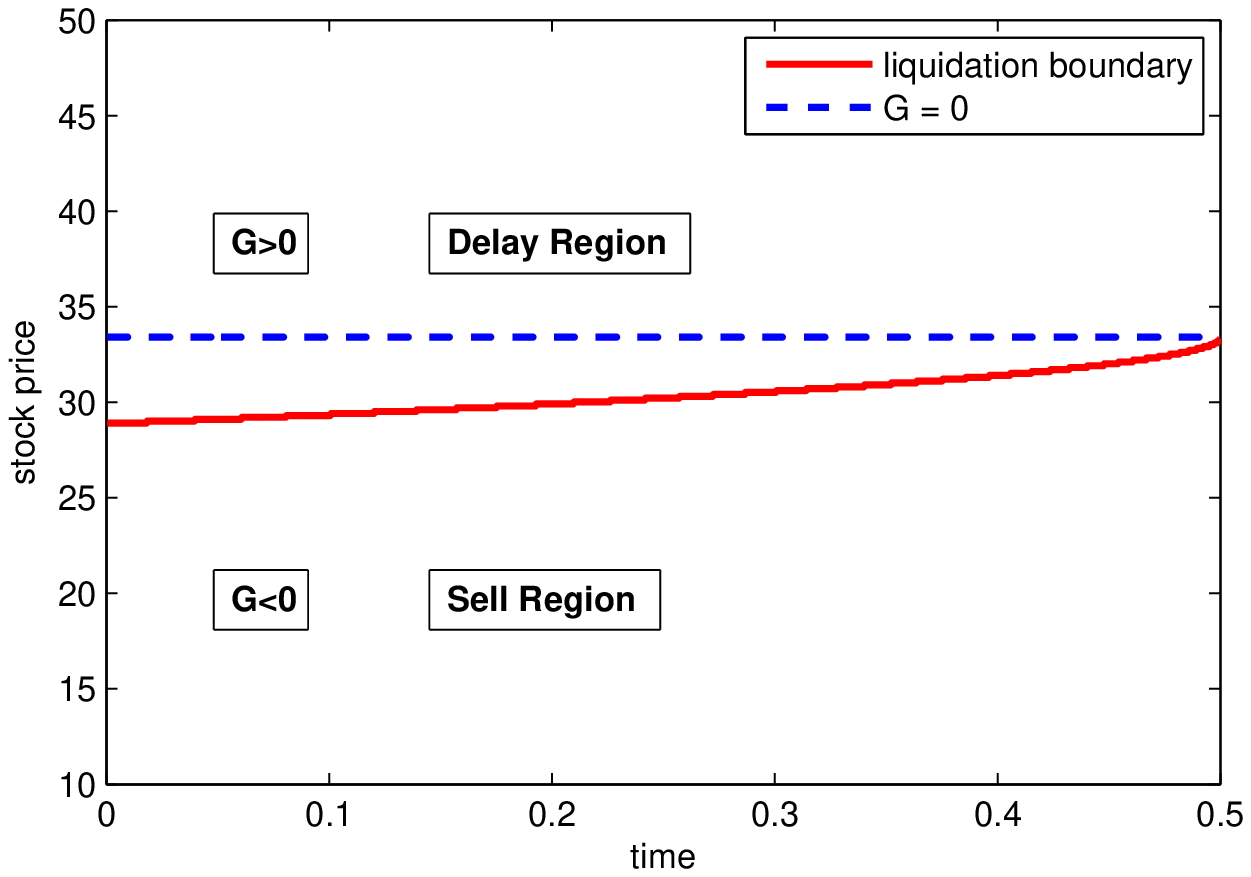}\qquad ~~
\includegraphics[scale=0.55]{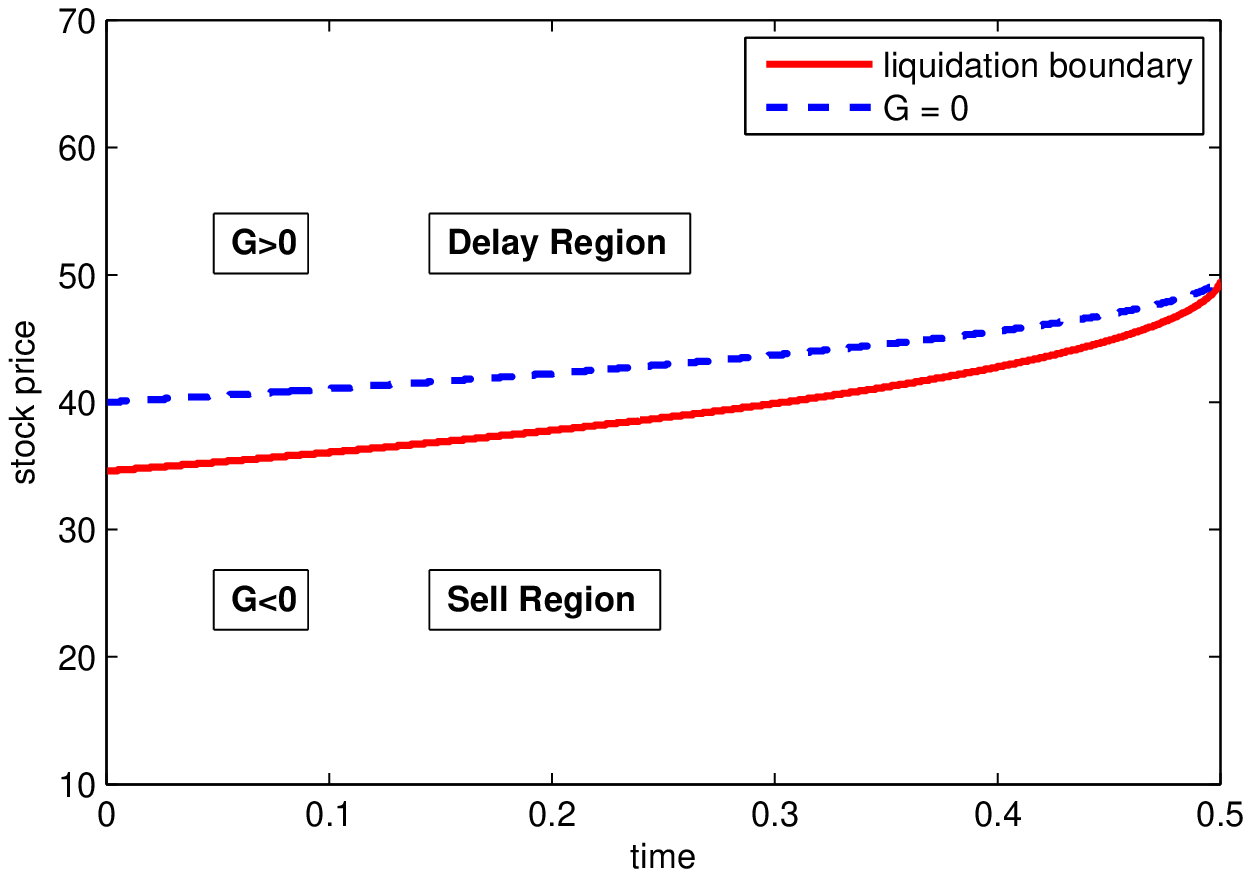}
\caption{\small{The optimal liquidation boundaries (solid) and the zero contours of $G^{\alpha}$ (dashed) of a stock (left panel) and a call option (right panel). We take $T=0.5$, $r=0.03$, $\mu=0.08$, $\sigma=0.3$, $K=50$, $\alpha=0.1$. The loss function is given by $\psi(\ell)=\ell$, with the benchmark $m=50$ for the stock and $m=C(0,K)$ for the call. }}
\label{Stock_Call_SF_1}
\end{figure}

Next, we consider the liquidation of a put option. Recall the put price $P(t,s)$ given in \eqref{CP}. Its  negative Delta implies that for $\mu\geq r$ the drive function $G_{Put}^\alpha(t,s) \le 0$,\, $\forall (t,s)$, meaning that  it is optimal to sell immediately by Proposition \ref{trivial_timing}. In contrast, when $\mu<r$, the sell region is empty if  $\alpha=0$, but under risk penalization  the optimal strategy  may be non-trivial. \\

\begin{proposition}\label{Put4}
Consider the optimal liquidation of a put under the GBM model with $\mu<r$ and $\alpha>0$. Then, the delay region  is  {bounded}. 
Furthermore, it is non-empty if $m<K$  and $\exists \, \hat{t} \in [0,T]$ such that $\alpha\psi'((m-P(\hat{t},0))^+)<r-\mu$.
\end{proposition}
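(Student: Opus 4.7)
The plan splits into the two assertions. \textbf{Boundedness of $\mathcal{D}$:} I would invoke Theorem~\ref{compactsupport} applied to the put's drive function, which by \eqref{Gdrift} and $P_s=-\Phi(-d_1)$ reads
\begin{equation*}
G^\alpha(t,s) = (r-\mu)\,s\,\Phi(-d_1) - \alpha\,\psi\bigl((m-P(t,s))^+\bigr).
\end{equation*}
To verify condition (i), I would use the put pricing formula \eqref{CP} in the form $s\Phi(-d_1) = Ke^{-r(T-t)}\Phi(-d_2) - P(t,s) \leq K$, so $G^\alpha \leq (r-\mu)K$ uniformly. To verify condition (ii), I would show that, as $s\to\infty$, both $s\Phi(-d_1)$ and $P(t,s)$ tend to $0$ uniformly in $t\in[0,T]$, so $G^\alpha(t,s)\to -\alpha\psi(m)<0$ (using $\psi(m)>0$), giving the required $b=\alpha\psi(m)/2$ for $s$ beyond a large threshold $k$.

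\textbf{Non-emptiness of $\mathcal{D}$:} I would exhibit a point of $\{G^\alpha>0\}$ and conclude by the inclusion \eqref{GsubL}. Expanding $G^\alpha(\hat{t},s)$ near $s=0^+$: first, $G^\alpha(\hat{t},0)=-\alpha\psi\bigl((m-P(\hat{t},0))^+\bigr)\leq 0$, with equality iff $m\leq P(\hat{t},0)$. Using $\Phi(-d_1(\hat{t},s))\to 1$ and $P_s(\hat{t},s)\to -1$ as $s\to 0^+$, the right-derivative takes the form
\begin{equation*}
\partial_s G^\alpha(\hat{t},s)\bigr|_{s=0^+} \,=\, (r-\mu) - \alpha\,\psi'\bigl((m-P(\hat{t},0))^+\bigr)\,\delta,
\end{equation*}
where $\delta\in\{0,1\}$ is the right-derivative at $s=0$ of $s\mapsto(m-P(\hat{t},s))^+$ (namely $\delta=0$ when $m<P(\hat{t},0)$ and $\delta=1$ when $m\geq P(\hat{t},0)$). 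In either regime, the standing hypothesis $\alpha\psi'((m-P(\hat{t},0))^+)<r-\mu$ makes this right-derivative strictly positive. Meanwhile, $m<K=P(T,0)$ combined with continuity of $t\mapsto P(t,0)=Ke^{-r(T-t)}$ allows choosing $\hat{t}$ near $T$ with $m<P(\hat{t},0)$, so simultaneously $G^\alpha(\hat{t},0)=0$ and its right-derivative at $s=0$ is strictly positive. Therefore $G^\alpha(\hat{t},s)>0$ for all sufficiently small $s>0$, and hence $(\hat{t},s)\in\mathcal{D}$.

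The main technical obstacle I anticipate is the uniform-in-$t$ decay needed for condition (ii) in Part~1, especially as $t\to T^-$ where $\sigma\sqrt{T-t}$ in the denominator of $d_1,d_2$ vanishes. I plan to handle this with the Gaussian tail bound $\Phi(-x)\leq \phi(x)/x$ applied to $d_2$, together with the observation that for $s\geq K\exp((r+\sigma^2/2)T)$ one has $d_2(t,s)\geq \log(s/K)/(\sigma\sqrt{T})$ uniformly in $t\in[0,T]$, so that $\Phi(-d_2)$ (and hence $P$ and $s\Phi(-d_1)$) decays uniformly as $s\to\infty$.
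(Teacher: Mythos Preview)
For boundedness, your route differs from the paper's. You verify Theorem~\ref{compactsupport} directly for $G^\alpha$ via a uniform-in-$t$ limit $G^\alpha(t,s)\to -\alpha\psi(m)<0$. The paper instead truncates the loss function to $\overline{\psi}(\ell)=\min\{\psi(\ell),\hat b\}$ with $\hat b\in(0,\alpha\psi(m))$, builds a dominating drive function $\overline{G}^\alpha\geq G^\alpha$, shows $\partial_t\overline{G}^\alpha\le 0$ for large $s$ so that condition (ii) reduces to the single slice $t=0$, and transfers the conclusion back via Corollary~\ref{Compare_G}. Your approach is more elementary and avoids the truncation device; however, the specific uniform bound you state, $d_2(t,s)\geq \log(s/K)/(\sigma\sqrt{T})$, fails at $t=0$ whenever $r<\sigma^2/2$. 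The fix is easy: for $s>K\exp(|r-\sigma^2/2|T)$ one has $d_2(t,s)\geq \bigl(\log(s/K)-|r-\sigma^2/2|T\bigr)/(\sigma\sqrt{T})$ uniformly in $t\in[0,T]$, which suffices for the Gaussian tail bound and hence for $s\Phi(-d_1)\le K\Phi(-d_2)\to 0$ uniformly.

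For non-emptiness you follow the paper's idea (expand $G^\alpha$ at $s=0^+$ and invoke \eqref{GsubL}), but your exposition conflates two time points: the proposition \emph{gives} a $\hat t$ satisfying the $\psi'$-inequality, and you then ``choose $\hat t$ near $T$'' with $m<P(\hat t,0)$, which is a different point. Once you pick $\hat t'$ close to $T$ so that $m<P(\hat t',0)=Ke^{-r(T-\hat t')}$ (possible since $m<K$), the penalty term vanishes identically for small $s$ and $G^\alpha(\hat t',s)=(r-\mu)s\Phi(-d_1)>0$ there---so the $\psi'$-hypothesis is not even needed at $\hat t'$. This is correct; just separate the two time points explicitly. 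The paper instead works at the given $\hat t$ and asserts $G^\alpha(\hat t,0)=0$ directly from \eqref{Put1}, which tacitly requires $m\le P(\hat t,0)$; your construction of $\hat t'$ makes this point explicit rather than leaving it implicit.
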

\begin{proof}
The drive function for the put $G^\alpha \equiv G^\alpha_{Put}(t,s)=(r-\mu)s\Phi(-d_1)-\alpha\psi((m-P(t,s))^+)$ satisfies
\begin{align}
\lim_{s\to 0} G^\alpha(t,s)&=-\alpha\psi((m-K)^+) \leq 0, \label{Put1} \\
\lim_{s\to \infty} G^\alpha(t,s)&=-\alpha\psi(m) < 0, \label{Put2} \\
\frac{\partial G^\alpha}{\partial s}(t,s)&=[r-\mu-\alpha\psi'((m-P(t,s))^+)\indic{m>P(t,s)}]\Phi(-d_1)-(r-\mu)s\Gamma, \label{Put3}
\end{align}
where $\Gamma=\frac{\partial^2 P}{\partial s^2}\geq 0$. In turn, we fix any $\hat{b}\in (0,\alpha \psi (m))$ and define $\overline{\psi}(\ell):=\min\{\psi(\ell),\hat{b}\}$. This implies the inequality\begin{equation*}
\overline{G}^\alpha(t,s):=(r-\mu)s\Phi(-d_1) - \alpha \overline{\psi}((m-P(t,s))^+) \geq G^\alpha(t,s).
\end{equation*}
Then by Corollary \ref{Compare_G}, we only need to show that $\overline{G}^\alpha$ satisfies the assumptions of Theorem \ref{compactsupport}. We observe that $\overline{G}^\alpha$ is bounded above and it follows from  \eqref{Put2} that $\lim_{s\to\infty}\overline{G}^\alpha(t,s)\to -\alpha \hat{b} <0$ for every $t\in[0,T]$. Moreover, there exists $\hat{s}>0$ such that for every $s>\hat{s}$, $\overline{\psi}((m-P(t,s))^+)=\hat{b}$. Consequently, we have 
\begin{equation*}
\frac{\partial \overline{G}^\alpha}{\partial t}
= (\mu-r)s\phi(d_1)\frac{\log(\frac{s}{K})-(r+\frac{\sigma^2}{2})(T-t)}{2\sigma(T-t)^\frac{3}{2}}\leq 0,
\end{equation*}
for $s>\max\{\hat{s}, K\exp((r+\sigma^2/2)T)\}$ and $t\in[0,T]$.  Since $\overline{G}^\alpha (0,s)\to -\alpha \hat{b}$ as $s\to-\infty$, we can choose  $b\in(0,\alpha \hat{b})$ such that   $\exists k>\max\{\hat{s}, K\exp((r+\sigma^2/2)T)\}$ and  $-b>\overline{G}(0,s)>\overline{G}(t,s)$ in  $[0,T]\times [k,\infty)$. Therefore,  $\overline{G}$ satisfies the assumptions of Theorem \ref{compactsupport}. 

Finally, suppose $\exists \, \hat{t} \in [0,T]$ such that $\alpha\psi'((m-P(\hat{t},0))^+)<r-\mu$, where $m<K$. It follows from \eqref{Put1} and \eqref{Put3} that  $G^{\alpha}(\hat{t},0)=0$ and $\frac{\partial G^{\alpha}}{\partial s}(\hat{t},0)>0$, so that the set $\{G^\alpha>0\}$ is non-empty. In turn, the inclusion  \eqref{GsubL} implies that  the delay region is also non-empty.
\end{proof}

\begin{remark}  As an example, the delay region is empty if $\alpha\psi'((m-P(t,0))^+)\geq r-\mu>0,\,\forall \,t \in [0,T]$. Indeed, since we have  $G^\alpha(t,0)\leq 0$ and $\frac{\partial G^{\alpha}}{\partial s}(t,0)\leq 0,\,\forall\,t\in[0,T]$, $G^\alpha$ cannot be strictly positive. By Proposition \ref{trivial_timing}, it is optimal to sell immediately.
\end{remark}

Proposition \ref{Put4} is illustrated in Figure \ref{Put_SF}. In these examples, the delay region is non-empty and the sell region is unbounded but may  be disconnected  (Figure \ref{Put_SF} (right)). This can arise when, for example, $G^\alpha(t,0)<0$ for every $t\in[0,T]$, but $\min_t\max G^\alpha(t,s)>0$. The intuition for a disconnected sell region is as follows. If the put is deeply in the money (i.e. when $S_t$ is close to zero), its market price  has very limited room to increase since it is bounded above $Ke^{-r(T-t)}$. At the same time, delaying sale further will incur  a penalty. Therefore, when the penalization coefficient $\alpha$ is high,  it is optimal to sell at a low  stock price level. On the other hand,  if   the put is deep out of the money (i.e. when $S_t$ is very high), the market price and the Delta  of the put are   close to zero, meaning the drive function becomes more negative  and selling immediately is optimal.

\begin{figure}[h!]
\centering
\includegraphics[scale=0.55]{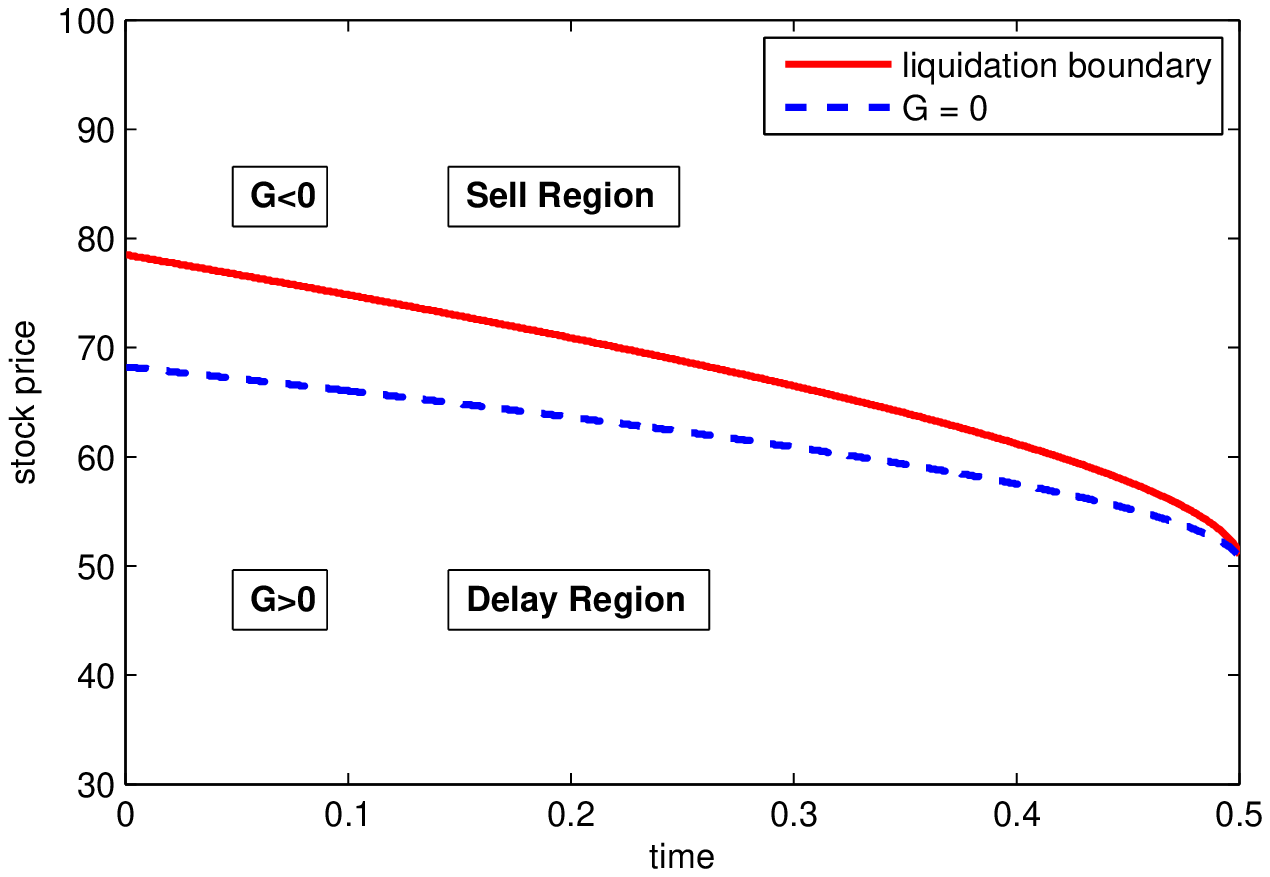}\qquad ~~
\includegraphics[scale=0.55]{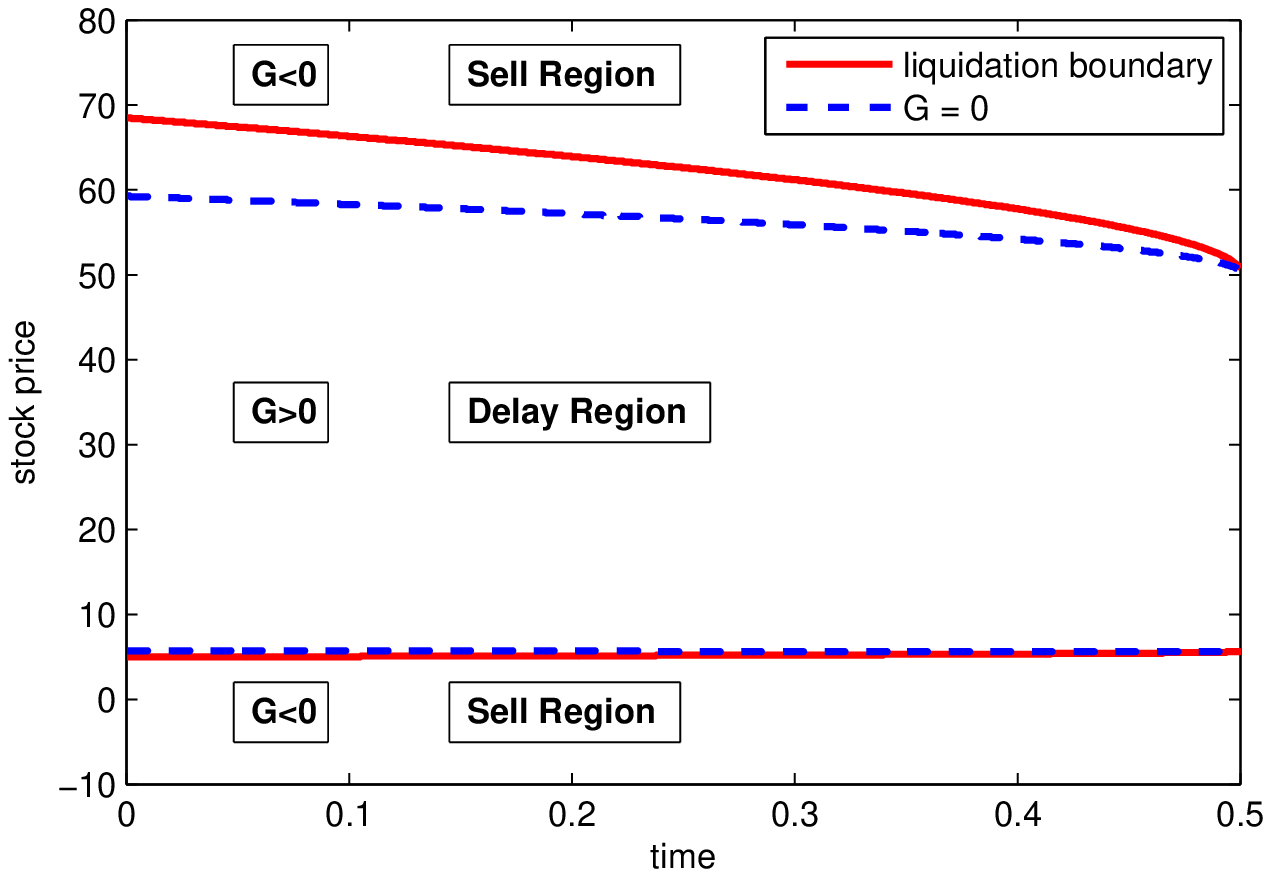}
\caption{\small{The optimal liquidation boundary (solid) and the zero contour of $G^{\alpha}$ (dashed) of a put option under GBM dynamics with the loss function $\psi(\ell)=\ell$.  We take $m=2K$, $\alpha=0.001$ (left panel), and $m=P(0,K)$, $\alpha=0.01$ (right panel). Parameters: $T=0.5$, $r=0.03$, $\mu=0.02$, $\sigma=0.3$, $K=50$.}}
\label{Put_SF}
\end{figure}

For a long position in   calls and the underlying stock, or in puts, the Delta $C_s$ takes a constant sign. As an example of a derivative with a Delta of non-constant sign, we consider a long straddle. This is a combination of a call and a put with strike prices $K_1 \leq K_2$ respectively and the same maturity $T$. The payoff of a straddle is given by $h^{STD}(S_T):=(S_T-K_1)^+ + (K_2-S_T)^+$. The market price of a long straddle, denoted by $C^{STD}$, is simply the sum of the respective Black-Scholes call and put prices, i.e. $C^{STD}(t,s)=C(t,s)+P(t,s)$. For simplicity, we   set $K_1=K_2=K$.
\begin{proposition}\label{STD}
For the optimal  liquidation of a long straddle position under the GBM model, it follows that  \\
(i) if $\mu=r$, the  delay region must be empty;\\
(ii) if $\mu>r$, the delay region is unbounded;\\
(iii) if  $\mu<r$, the delay region is bounded.
\end{proposition}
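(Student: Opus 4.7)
The first step is to compute the straddle's drive function explicitly. Since $C^{STD}=C+P$, its Delta is $C_s+P_s=\Phi(d_1)-\Phi(-d_1)=2\Phi(d_1)-1$, which is positive for $s>s^{\ast}(t):=Ke^{-(r+\sigma^2/2)(T-t)}$ and negative for $s<s^{\ast}(t)$, so the straddle's Delta changes sign. Thus
\begin{equation*}
G^{\alpha}_{STD}(t,s)=(\mu-r)s\bigl(2\Phi(d_1)-1\bigr)-\alpha\psi\bigl((m-C^{STD}(t,s))^+\bigr).
\end{equation*}
All three parts will follow by inspecting the sign/asymptotics of $G^{\alpha}_{STD}$ and appealing to the general tools from Section \ref{sect-ana}.

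For part (i), $\mu=r$ kills the first term, leaving $G^{\alpha}_{STD}(t,s)=-\alpha\psi\bigl((m-C^{STD})^+\bigr)\le 0$ everywhere. Proposition \ref{trivial_timing} then gives $\tau^{\ast}=t$, so $\mathcal D=\emptyset$. For part (ii), with $\mu>r$, for $s\to\infty$ we have $d_1\to\infty$ so $2\Phi(d_1)-1\to 1$, while $C^{STD}(t,s)\to\infty$ so the shortfall and hence the penalty vanishes. Therefore $G^{\alpha}_{STD}(t,s)\sim(\mu-r)s>0$ for $s$ large, uniformly in $t\in[0,T]$. Hence $\{G^{\alpha}>0\}$ contains a strip $[0,T]\times[s_0,\infty)$, and the inclusion \eqref{GsubL} forces $\mathcal D$ to be unbounded.

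Part (iii) is the main effort: we verify the two hypotheses of Theorem \ref{compactsupport} when $\mu<r$. For the upper bound, split on $s$: on $s\ge K$ one has $d_1\ge 0$ for all $t\in[0,T]$, so $2\Phi(d_1)-1\ge 0$ and $(\mu-r)s(2\Phi(d_1)-1)\le 0$; on $s<K$ one has $\lvert (\mu-r)s(2\Phi(d_1)-1)\rvert\le (r-\mu)K$. Combined with $-\alpha\psi(\cdot)\le 0$, this gives $G^{\alpha}_{STD}(t,s)\le (r-\mu)K$, verifying condition (i). For condition (ii), observe that on $s\ge K$,
\begin{equation*}
d_1(t,s)=\frac{\log(s/K)}{\sigma\sqrt{T-t}}+\frac{(r+\sigma^2/2)\sqrt{T-t}}{\sigma}\ge \frac{2}{\sigma}\sqrt{\log(s/K)(r+\sigma^2/2)},
\end{equation*}
by the AM-GM bound applied to the positive function $A/x+Bx$ with $x=\sqrt{T-t}$. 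Consequently $\inf_{t\in[0,T]}d_1(t,s)\to\infty$ as $s\to\infty$, so $2\Phi(d_1(t,s))-1\to 1$ uniformly in $t$. Since $\mu-r<0$, this yields $G^{\alpha}_{STD}(t,s)\le(\mu-r)s(2\Phi(d_1)-1)\to-\infty$ uniformly in $t$, which produces the required $b,k>0$ with $G^{\alpha}_{STD}(t,s)<-b$ on $[0,T]\times[k,\infty)$. Theorem \ref{compactsupport} then gives a bounded delay region.

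The main obstacle is the argument in (iii): one must control the combined $(t,s)$-dependence of $d_1$, and in particular rule out degeneracies as $t\uparrow T$ for $s>K$. The AM-GM lower bound on $d_1$ that I sketched handles this cleanly, uniformly in $t$, and lets the generic Theorem \ref{compactsupport} deliver the conclusion without further model-specific work.
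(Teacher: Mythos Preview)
Your proof is correct. Parts (i) and (ii) are essentially identical to the paper's argument. In part (iii) you take a genuinely different route to verify condition (ii) of Theorem \ref{compactsupport}: the paper first shows that the penalty term vanishes for large $s$, then computes $\frac{\partial G^\alpha_{STD}}{\partial t}$ and checks it is nonpositive on $[0,T]\times[\max\{\hat s,\,K e^{(r+\sigma^2/2)T}\},\infty)$, so that $G^\alpha_{STD}(t,s)\le G^\alpha_{STD}(0,s)$ there, and then uses $G^\alpha_{STD}(0,s)\to-\infty$. You bypass this time-derivative computation by the AM--GM lower bound $d_1(t,s)\ge \frac{2}{\sigma}\sqrt{\log(s/K)(r+\sigma^2/2)}$ for $s\ge K$, which immediately gives $\inf_{t\in[0,T]}d_1(t,s)\to\infty$ and hence $G^\alpha_{STD}(t,s)\to-\infty$ uniformly in $t$. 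Your approach is more elementary and self-contained (no differentiation in $t$, no need to first isolate the region where the penalty vanishes), and also makes the upper bound in condition (i) explicit as $(r-\mu)K$, whereas the paper simply asserts $G^\alpha_{STD}$ is bounded above. The paper's monotonicity-in-$t$ technique, on the other hand, is the same device used in the call and put proofs (Propositions \ref{Put4} and \ref{prob_repr_OU}), so it has the advantage of methodological uniformity across the examples.
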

\begin{proof}
The straddle's drive function is $G^{\alpha}_{STD}(t,s)=(\mu-r)sC_s^{STD}(t,s)-\alpha \psi((m-C^{STD}(t,s))^+)$. For $\mu=r$, the conclusion follows immediately by Proposition \ref{trivial_timing}. If $\mu>r$ we simply notice that $G_{STD}^\alpha(t,s) \to \infty$ as $s\to\infty$ for every $t\in[0,T]$, and the assertion follows from the inclusion \eqref{GsubL}.

Now suppose $\mu<r$. We will show that $G^\alpha$ satisfies the assumptions of Theorem \ref{compactsupport}. Clearly, $G_{STD}^\alpha$ is bounded above. Since $C^{STD}(t,s) \to \infty$ as $s\to\infty$ for every $t\in[0,T]$, then there exists $\hat{s}>0$ such that, for every $s>\hat{s}$ and $t\in[0,T]$, $\psi((m-C^{STD}(t,s))^+) = 0$. Moreover, for $s>\max\{\hat{s},K\exp\left((r+\sigma^2/2)T\right)\}$, we have
\begin{equation*}
\frac{\partial \Phi(d_1)}{\partial t} = \phi(d_1)\frac{\log(\frac{s}{K})-(r+\frac{\sigma^2}{2})(T-t)}{2\sigma(T-t)^{\frac{3}{2}}} > 0,
\end{equation*}
and thus
\begin{equation*}
\frac{\partial G_{STD}^\alpha}{\partial t}(t,s)=2(\mu-r)s\frac{\partial \Phi(d_1)}{\partial t} \leq 0.
\end{equation*}
This implies $G_{STD}^\alpha(0,s)\geq G_{STD}^\alpha(t,s)$ for every $t\in[0,T]$. Since $G_{STD}^\alpha(0,s)\to -\infty$ as $s\to \infty$, for a fixed $b>0$ there exists $k_b>0$ such that $G^\alpha_{STD}(0,s)<-b$ for every $s\geq k_b$. Therefore, setting $k=\max\{\hat{s},K\exp\left((r+\sigma^2/2)T\right),k_b\}$, we have $G^\alpha_{STD}(t,s)\leq G^\alpha_{STD}(0,s)<-b$ in $[0,T]\times[k,\infty)$. Therefore, the assumptions of Theorem \ref{compactsupport} are satisfied and we conclude.
\end{proof}

In particular, Proposition \ref{STD}  suggests that when  $\mu<r$,  the sell region is  unbounded, even if $\alpha=0$. In Figure \ref{Straddle_SF2}, we illustrate the optimal liquidation boundaries for cases (ii) and (iii).  When the investor is bullish (left panel: $\mu = 0.08 > 0.03 = r$), the liquidation boundary is increasing and the delay region is on top of the sell region. Interestingly, the opposite is observed when the investor is bearish (right panel: $\mu = 0.02 < 0.03 = r$).

\begin{figure}[h!]
\centering
\includegraphics[scale=0.55]{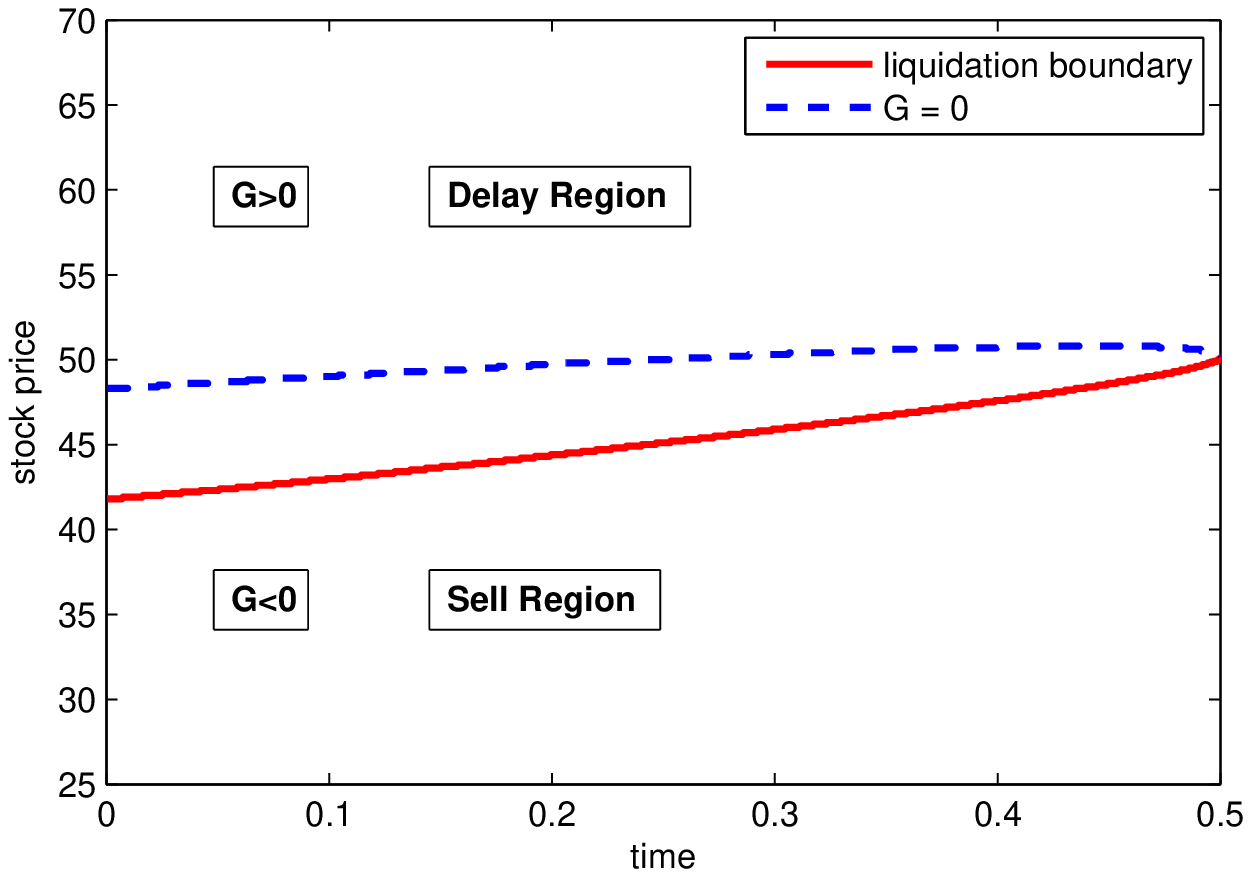}\qquad ~~
\includegraphics[scale=0.55]{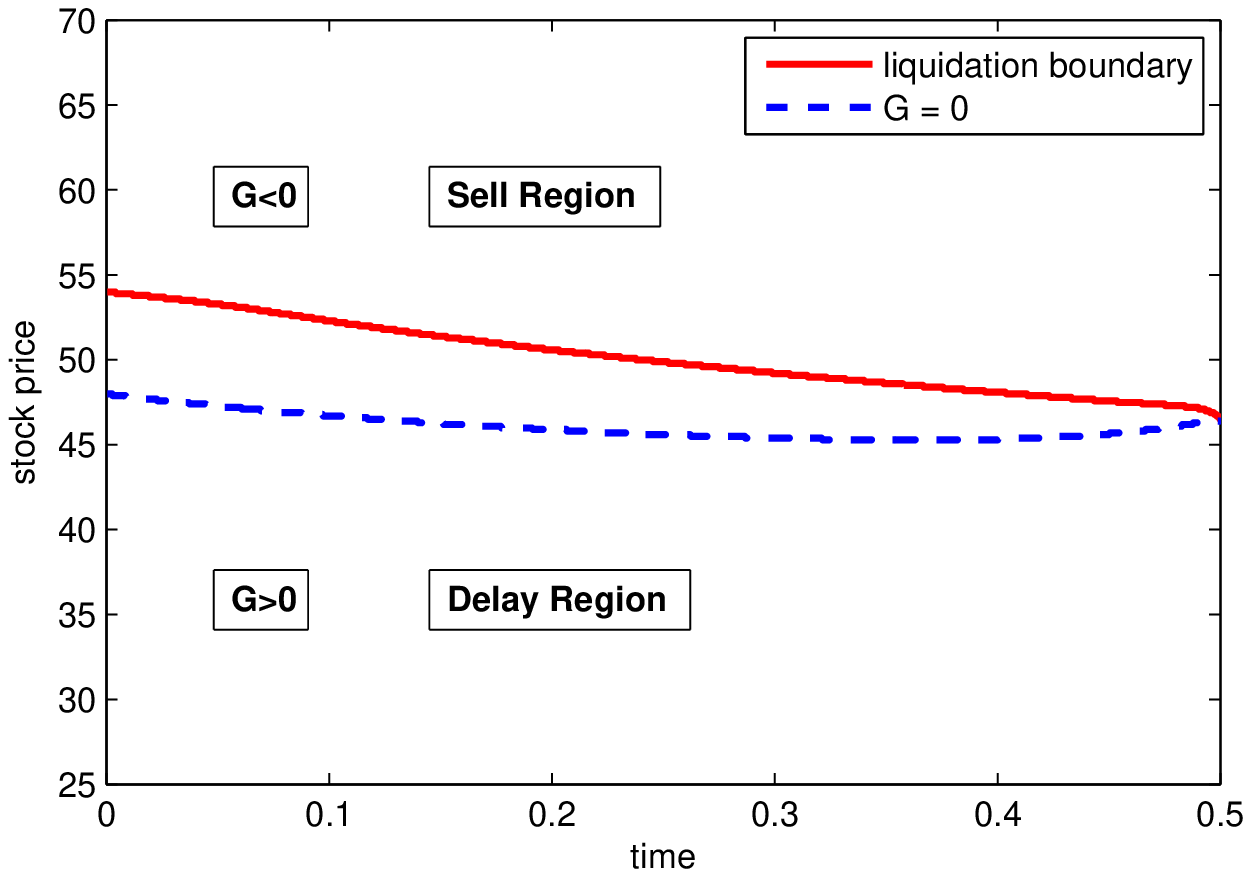}
\caption{\small{Optimal liquidation boundary and the zero contour of $G^{\alpha}$ for a straddle under the  GBM  model  with the loss function $\psi(\ell)=\ell$. We set $K=50$, $m=C^{STD}(0,K)$, $\alpha=0.1$, $r=0.03$, $\mu=0.08$ (left panel) and $\mu=0.02$ (right panel).}}
\label{Straddle_SF2}
\end{figure}

We end this section by discussing the liquidation timing of a stock with an infinite horizon ($T=\infty$). This leads to the following stationary optimal stopping problem 
\begin{equation}\label{PerpetualStock}
L(s) = \sup_{\tau \in \mathcal{T}} \E_{s} \left\{ \int_0^\tau e^{-ru} G^{\alpha}(S_u) \,du\right\}.
\end{equation}
where $G^\alpha(s) = (\mu-r)s - \alpha \psi((m-s)^+)$ and   $\mathcal{T}$ is the set of $\Fil$-stopping times taking values in $[0,\infty]$. When $\mu\leq r$, selling immediately is  optimal according to  Proposition \ref{trivial_timing}, as for the case with finite maturity. As it turns out,  the liquidation problem has the opposite  trivial solution when  $\mu>r$,  that is,  it is optimal to hold forever.  \\

\begin{proposition}
If $\mu>r$, then the value function $L(s)$ in \eqref{PerpetualStock} is infinite and it is optimal to never sell the stock.
\end{proposition}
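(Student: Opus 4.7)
The plan is to exhibit a family of stopping times $\tau = T$ with $T \to \infty$ along which the expected payoff diverges, forcing $L(s) = +\infty$.

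First, I would exploit the structure of the drive function $G^\alpha(s) = (\mu-r)s - \alpha\psi((m-s)^+)$. The penalty term is bounded above by $\alpha\psi(m)$ for all $s \geq 0$, since $\psi$ is increasing and $(m-s)^+ \leq m$. Therefore, for any deterministic $T > 0$,
\begin{equation*}
\E_s\left\{\int_0^T e^{-ru} G^\alpha(S_u)\,du\right\} \geq (\mu-r)\int_0^T e^{-ru}\E_s[S_u]\,du - \alpha\psi(m)\int_0^T e^{-ru}\,du.
\end{equation*}

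Next I would compute the two integrals explicitly using the GBM dynamics. Since $\E_s[S_u] = s e^{\mu u}$, the first integral evaluates to $s\int_0^T e^{(\mu-r)u}\,du = s\,\bigl(e^{(\mu-r)T}-1\bigr)/(\mu-r)$, so the first term becomes $s\bigl(e^{(\mu-r)T}-1\bigr)$. The second integral is bounded by $\alpha\psi(m)/r$ uniformly in $T$. Combining these,
\begin{equation*}
L(s) \geq \E_s\left\{\int_0^T e^{-ru} G^\alpha(S_u)\,du\right\} \geq s\bigl(e^{(\mu-r)T}-1\bigr) - \frac{\alpha\psi(m)}{r}.
\end{equation*}

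Since $\mu > r$, the right-hand side tends to $+\infty$ as $T \to \infty$, so $L(s) = +\infty$. Because no finite stopping time can achieve this, the supremum is approached only as $\tau \to \infty$, which means it is optimal never to sell. I do not foresee any serious obstacle: the only step requiring mild care is justifying the use of Fubini to move the expectation inside the integral, which is immediate since $e^{-ru} S_u \geq 0$ and $e^{-ru}\psi((m-S_u)^+) \leq e^{-ru}\psi(m)$ are non-negative and bounded by an integrable function on $[0,T]$ for any finite $T$.
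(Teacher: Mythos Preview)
Your proposal is correct and follows essentially the same approach as the paper's proof: both split $G^\alpha$ into the stock-growth term and the penalty term, bound the penalty by $\alpha\psi(m)$ using monotonicity of $\psi$, and compute $\E_s[S_u]=se^{\mu u}$ to show the stock-growth contribution diverges. The only cosmetic difference is that the paper takes $\tau=\infty$ directly (which is admissible since $\mathcal{T}$ consists of stopping times valued in $[0,\infty]$) and invokes Tonelli on $[0,\infty)$, whereas you work on $[0,T]$ and let $T\to\infty$; the substance is identical.
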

\begin{proof}
Consider a candidate stopping time $\tau=\infty$. Then, by applying Tonelli's theorem, we have
\begin{align*}\notag
\E_s\left\{\int_0^{\infty} e^{-ru}G^\alpha(S_u)du\right\}
& = \E_s\left\{\int_0^{\infty}e^{-ru}(\mu-r)S_udu - \alpha\int_0^{\infty} e^{-ru}\psi((m-S_u)^+)du\right\} \notag \\
& = \int_0^{\infty}e^{(\mu-r)u}(\mu-r)sdu - \alpha\int_0^{\infty} e^{-ru}\E\left\{\psi((m-S_u)^+)\right\}du \notag \\
& \geq \int_0^{\infty}e^{(\mu-r)u}(\mu-r)sdu - \alpha\int_0^{\infty} e^{-ru}\psi(m)du = \infty, \notag
\end{align*}
since $\mu>r$ and $\psi$ is increasing. Hence, $L(s)=\infty$ and it is never optimal to sell.
\end{proof}

\section{Optimal Liquidation with an Exponential OU Underlying}\label{sect-OU}
In the exponential OU model, the stock price satisfies the SDE
\begin{equation}\label{MRSDE}
dS_t = \beta(\theta - \log S_t )S_t\,dt + \sigma S_t dW_t,
\end{equation}
with $\theta \in\R$ and $\beta, \sigma >0$. Therefore, the optimal liquidation premium $L(t,s)$ is given by equation \eqref{L_G} with the drive function 
\begin{equation}\label{GOUgen}
G^{\alpha}(t,s)=[\beta(\theta-\log(s))-r]sV_s(t,s)-\alpha \psi((m-V(t,s))^+),
\end{equation}where $V(t,s)$ is a generic option price in \eqref{mkt_pr}.

In contrast to the GBM case,  the optimal  liquidation strategy can now be non-trivial for  a  stock or a  call  when there is no penalty. More generally, we can prove that the delay region is in fact  bounded. The intuition should be clear: when  $S_t$ is very high, it is expected to revert back to its long-term mean, so that selling immediately becomes optimal.

\begin{proposition}\label{prob_repr_OU}
Under the exponential OU model, the delay region for a call is bounded.
\end{proposition}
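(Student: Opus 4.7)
The plan is to verify conditions (i) and (ii) of Theorem~\ref{compactsupport} for the call's drive function
\begin{equation*}
G^{\alpha}(t,s) = \bigl[\beta(\theta - \log s) - r\bigr]\,s\,\Phi(d_1(t,s)) - \alpha\,\psi\bigl((m - C(t,s))^{+}\bigr),
\end{equation*}
where $C$ is the Black--Scholes call price (valid because under $\Q$ the exponential OU dynamics become GBM with drift $r$, as noted in Section~\ref{sect:GBM_OU}) and $C_s = \Phi(d_1) \in [0,1]$. Since $S$ is time-homogeneous, once both conditions are in place the theorem yields boundedness of the delay region.

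For condition (i), I would drop the nonpositive penalty term and use $\Phi(d_1) \leq 1$ to obtain
\begin{equation*}
G^{\alpha}(t,s) \le f(s) := \bigl[\beta(\theta - \log s) - r\bigr]^{+} s.
\end{equation*}
A one-variable calculation shows that $f$ vanishes on $[e^{\theta - r/\beta}, \infty)$, while on the complementary interval the expression $s(\beta\theta - r) - \beta s \log s$ attains its maximum at $s^{\star} = e^{\theta - 1 - r/\beta}$ with finite value $\beta s^{\star}$; any constant $c > \beta s^{\star}$ then satisfies (i).

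For condition (ii), as $s \to \infty$ the mean-reverting drift forces $\beta(\theta - \log s) - r \to -\infty$. Put--call parity gives $C(t,s) \ge s - K e^{-r(T-t)} \ge s - K$, so $(m - C(t,s))^{+} = 0$ uniformly in $t$ once $s > m + K$. Moreover, for $s > K$ the quantity $d_1(t,s)$ is strictly positive for every $t \in [0,T)$ (with the natural limit $\Phi(d_1) \to 1$ at $t = T$), so $\Phi(d_1(t,s)) \geq 1/2$ uniformly in $t \in [0,T]$. Combining these observations, for $s$ larger than $\max(K,\, m + K,\, e^{\theta - r/\beta})$ one has
\begin{equation*}
G^{\alpha}(t,s) = \bigl[\beta(\theta - \log s) - r\bigr]\,s\,\Phi(d_1(t,s)) \le \tfrac{1}{2}\bigl[\beta(\theta - \log s) - r\bigr]\,s,
\end{equation*}
whose right-hand side tends to $-\infty$ uniformly in $t$. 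Hence for any $b > 0$ a sufficiently large threshold $k$ yields $G^{\alpha}(t,s) < -b$ on $[0,T] \times [k, \infty)$, and Theorem~\ref{compactsupport} concludes the proof. The principal subtlety is maintaining uniformity in $t$ of the large-$s$ limit, which is handled cleanly by the explicit lower bound on $\Phi(d_1)$ valid for every $s > K$.
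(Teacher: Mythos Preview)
Your argument is correct and verifies the same two hypotheses of Theorem~\ref{compactsupport}, but your treatment of condition~(ii) differs from the paper's. The paper computes the time derivative $\partial G^\alpha_{Call}/\partial t$, shows it is nonpositive for $s$ exceeding $\max\{e^{\theta - r/\beta},\, K e^{(r+\sigma^2/2)T}\}$, and then uses monotonicity in $t$ to reduce to the single slice $G^\alpha_{Call}(0,s)$, which tends to $-\infty$. You instead bypass the time derivative entirely by exploiting the explicit lower bound $\Phi(d_1(t,s)) \ge \tfrac12$ valid uniformly in $t$ for $s > K$, together with the uniform vanishing of the penalty for $s > m+K$ via put--call parity. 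Your route is more elementary and yields a cleaner uniform estimate; the paper's time-monotonicity device, on the other hand, is the template it reuses for the put (Proposition~\ref{Put4}) and the straddle (Proposition~\ref{STD}), so it has the advantage of being a single method applied across several cases.
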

\begin{proof}
 The drive function $G^\alpha_{Call}$ for the call is given by \eqref{GOUgen} with $V(t,s) = C(t,s)$ (see \eqref{CP} for the call price). It is bounded above, so it satisfies condition (i) of Theorem \eqref{compactsupport}. As is well known, the call price satisfies $\frac{\partial C(t,s)}{\partial t}\leq0$.  In addition,  $\beta(\theta-\log(s))-r \leq 0$ iff  $s \geq \exp(\theta-\frac{r}{\beta})$, and     $\frac{\partial \Phi(d_1)}{\partial t} \geq 0$ for $s \geq K\exp\left((r+\sigma^2/2)T\right)$. In turn, we have
\begin{equation*}
\frac{\partial G^\alpha_{Call}}{\partial t}(t,s) = [\beta(\theta-\log(s))-r]s\frac{\partial \Phi(d_1)}{\partial t}
											+ \alpha\frac{\partial C(t,s)}{\partial t}\psi'((m-C(t,s))^+)\indic{m>C(t,s)} \leq 0, 
\end{equation*} for $s>\max\{\exp(\theta-\frac{r}{\beta}), K\exp\left((r+\sigma^2/2)T\right)\}$ and  $t\in[0,T]$.  This  implies $G^\alpha_{Call}(0,s) \geq G^\alpha_{Call}(t,s)$. Fix $b>0$. Since $G^\alpha_{Call}(0,s)\to -\infty$, $\exists k_b>0$ s.t., $\forall s>k_b$, $G^\alpha_{Call}(0,s)<-b$. Hence, if we set $k=\max\{\exp(\theta-\frac{r}{\beta}), K\exp\left((r+\sigma^2/2)T\right), k_b\}$, we are guaranteed that $G^\alpha_{Call}(t,s)\leq G^\alpha_{Call}(0,s)<-b$ in $[0,T]\times[k,\infty)$, thus satisfying condition (ii) of Theorem  \ref{compactsupport}. As a result,  Theorem \ref{compactsupport} applies and gives the  boundedness of the delay region for a call.
\end{proof}
Since    a stock can be viewed as   a call with strike $K=0$, Proposition \ref{Call_OU} also applies to the optimal liquidation of a stock   over a  finite time horizon.  Also, we notice  the delay region   can be empty, and we can   identify  this case by finding the maximum of the drive function. As an example, we consider the case of the stock with penalty function $\psi((m-S_t)^+)=(m-S_t)^+$, and we obtain the maximizer of $G^\alpha$ in different scenarios
\begin{equation*}
\arg\max G^\alpha = \begin{cases}
			\exp(\theta-1-\frac{r-\alpha}{\beta}) & \text { if } \exp(\theta-1-\frac{r-\alpha}{\beta}) < m,\\
			\exp(\theta-1-\frac{r}{\beta}) & \text { if } \exp(\theta-1-\frac{r}{\beta}) > m,\\
			m & \text { otherwise},
			\end{cases}
\end{equation*}
and the corresponding maximum values
\begin{equation*}
\max G^\alpha = \begin{cases}
			(\beta-\alpha)\hat{s}_1-\alpha(m-s^*_1) & \text { if } \hat{s}_1 < m,\\
			\beta \hat{s}_2 & \text { if } \hat{s}_2 > m,\\
			(\beta(\theta-\log(m))-r)m & \text { otherwise,}
			\end{cases}
\end{equation*}
where
\begin{equation*}
\hat{s}_1 = \exp(\theta-1-\frac{r-\alpha}{\beta}), \qquad \hat{s}_2 = \exp(\theta-1-\frac{r}{\beta}).
\end{equation*}
Thus, the delay region is non-empty if and only if $\max G^\alpha>0$.

The optimal liquidation boundary for stock is shown in Figure \ref{Stock_OU} for $\alpha=0$ (left panel) and $\alpha>0$ (right panel). We notice that, in both cases, the optimal strategy is to sell immediately if $S_t$ is high enough. Intuitively, if $S_t$ is high, it is expected to revert back to its long-term mean, so selling immediately becomes optimal. However, if $S_t$ is low, the optimal behavior depends on the parameter $\alpha$. On one hand, $S_t$ is expected to increase and thus the investor should wait to sell at a better price (Figure \ref{Stock_OU}, right panel). On the other hand, such benefit is countered (if the penalization coefficient is high enough) by the risk incurred from holding the position, and this induces the investor to sell immediately. As a consequence, the sell region is disconnected (Figure \ref{Stock_OU}, right panel).
\begin{figure}[ht]
\centering
\includegraphics[scale=0.55]{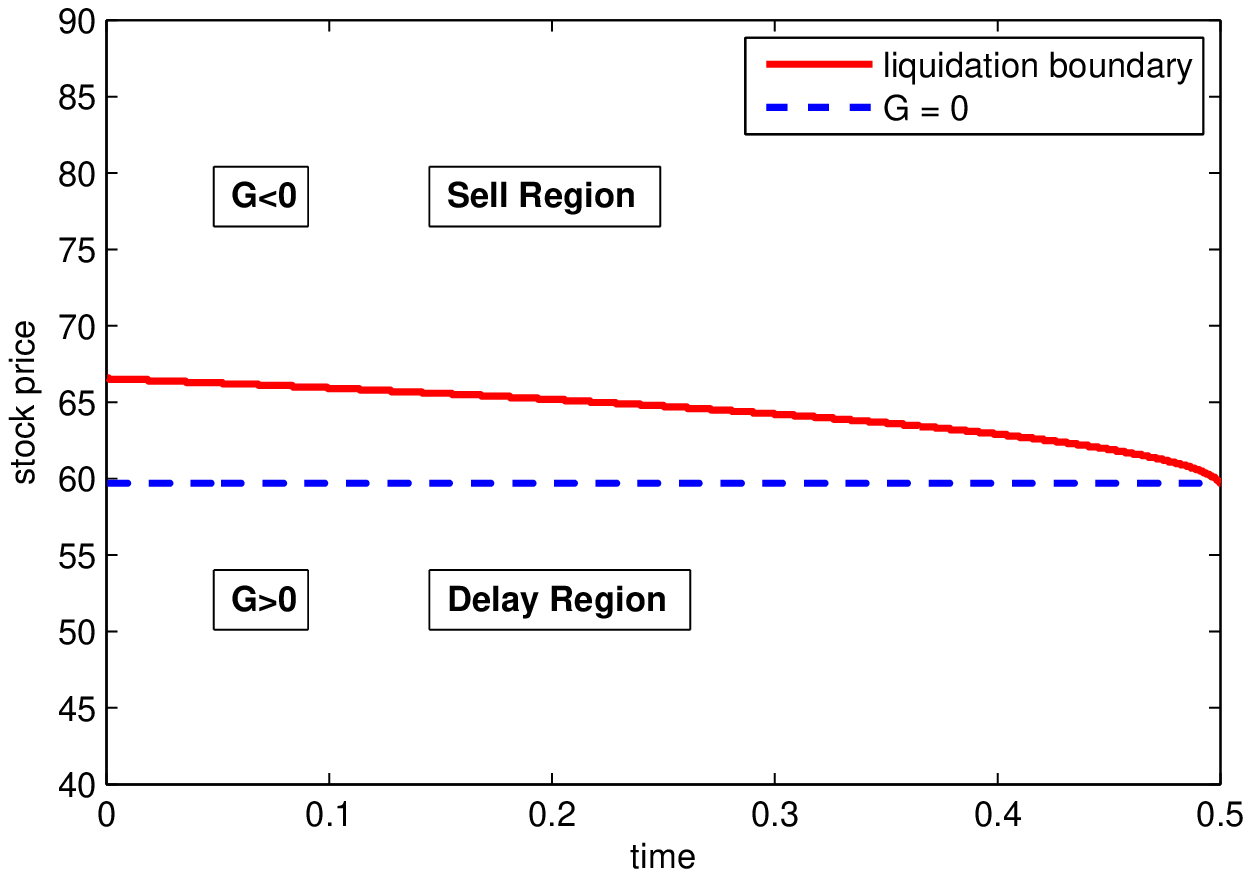}\qquad ~~
\includegraphics[scale=0.55]{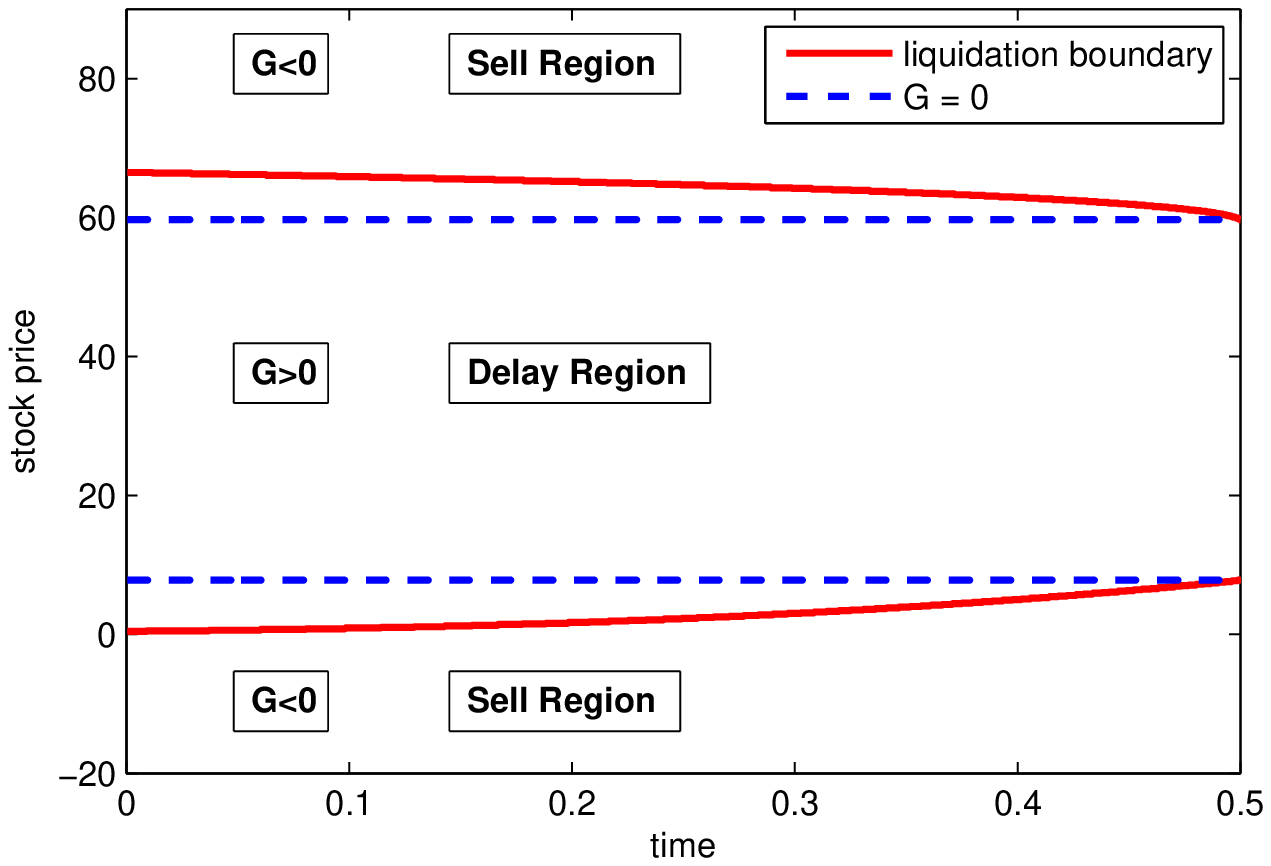}
\caption{\small{The liquidation boundary (solid) and the zero contour of $G^{\alpha}$ (dashed) for a stock under exponential OU dynamics. Parameters: $T=0.5$, $r=0.03$, $\theta=\log(60)$, $\beta=4$, $\sigma=0.3$, $\psi(\ell)=\ell$, $\alpha=0$ (left), $\alpha=1.5$ (right). }}
\label{Stock_OU}
\end{figure}

Figure \ref{Call_OU} illustrates the delay region for a call option with penalty.
\begin{figure}[ht]
\centering
\includegraphics[scale=0.55]{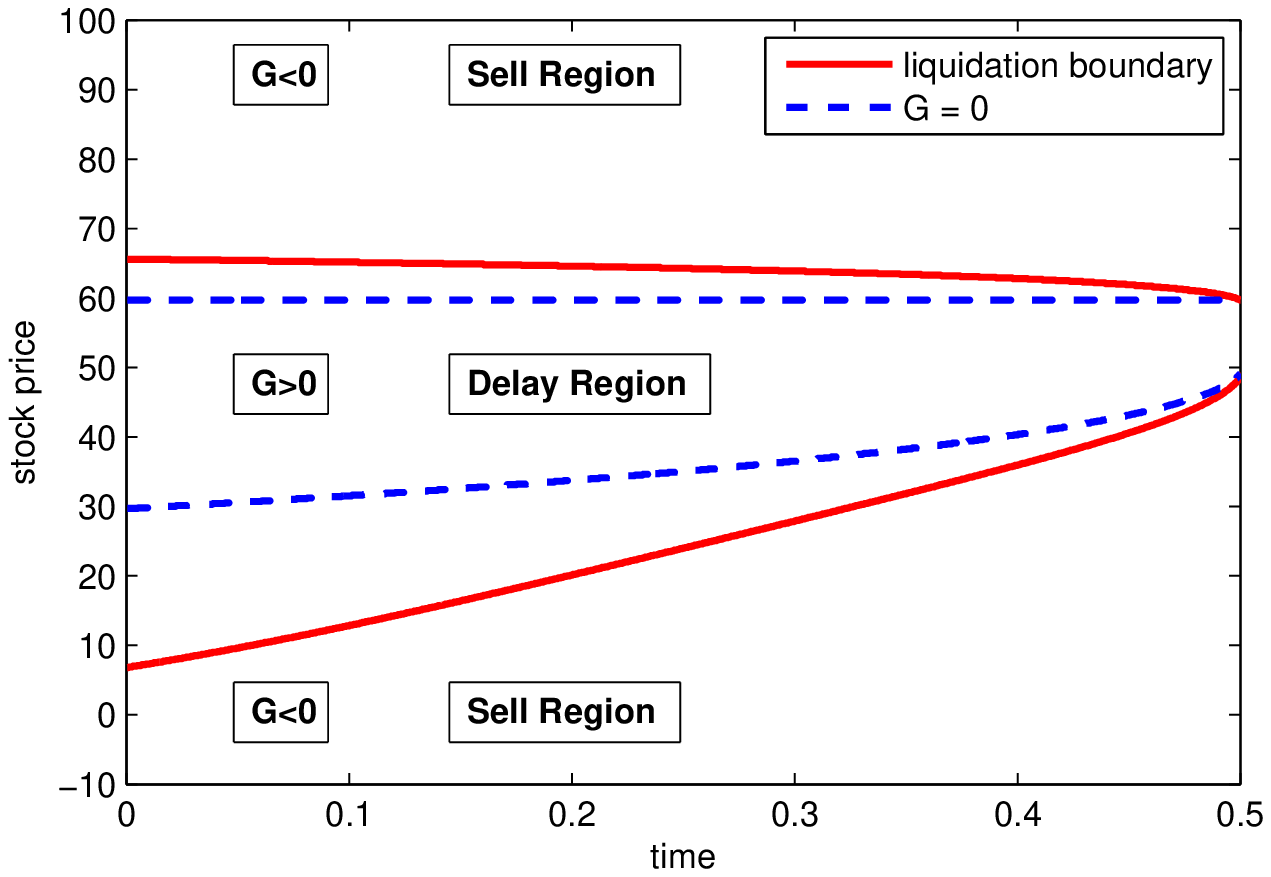}\qquad ~~
\includegraphics[scale=0.55]{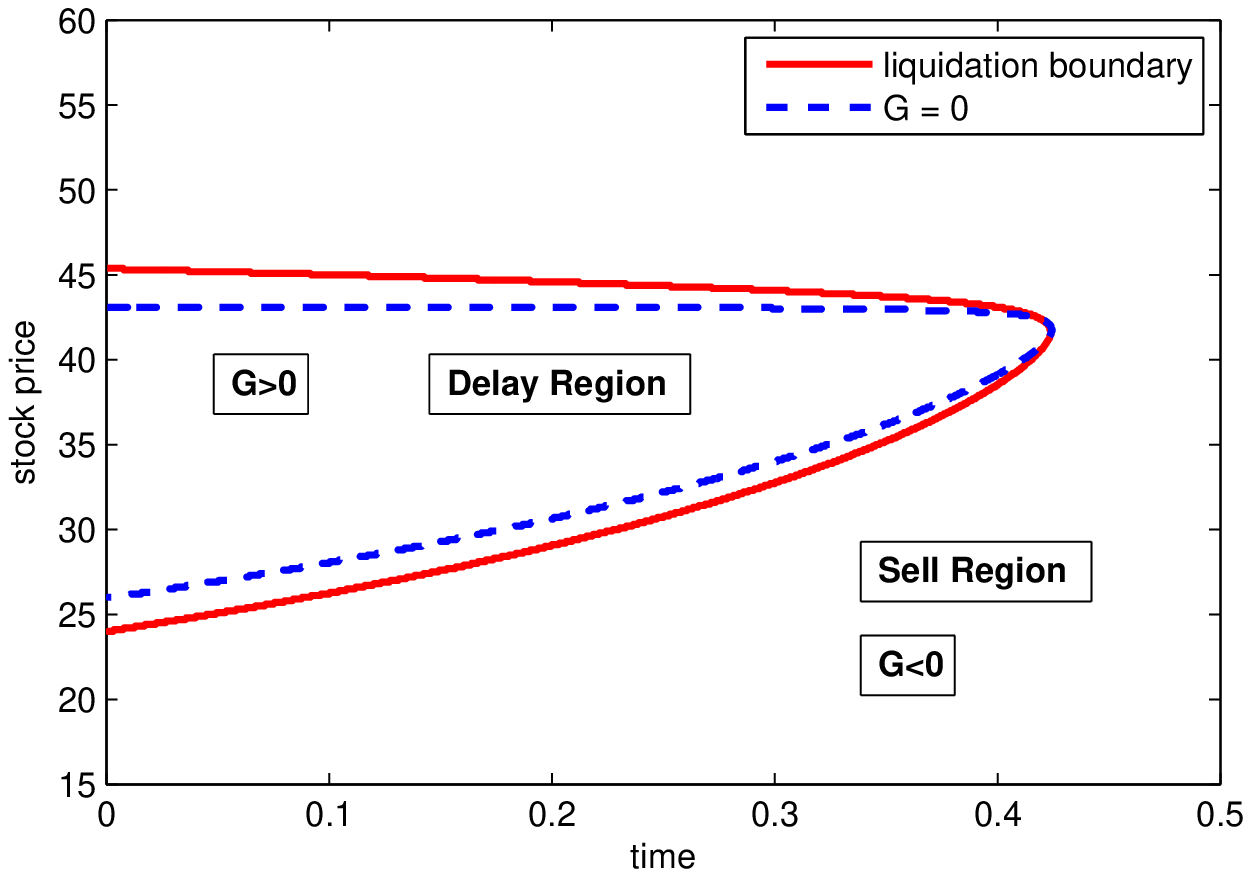}
\caption{\small{The liquidation boundary (red solid) and the zero contour of $G^{\alpha}$ (dashed) for a call under exponential OU dynamics. We take $\alpha=0.2$, $\theta=\log(60)$, $\beta=4$ in the left panel, and $\alpha=0.001$, $\theta=\log(50)$ and $\beta=0.2$ in the right panel, with common parameters $T=0.5$, $r=0.03$, $\sigma=0.3$, $\psi(\ell)=\ell$.}}
\label{Call_OU}
\end{figure}
In the right panel, we observe the interesting phenomena where the sell region is connected and contains the nonempty delay region. If the parameter $\beta$ (which measures the speed of mean reversion) is not sufficiently high, there may be no time for the price of the option to revert back to its long-term mean  before expiration, so that selling immediately becomes optimal close to maturity.

 \begin{proposition} \label{Put_OU_prop}
For the liquidation of a put option under the exponential OU model, the delay region is bounded if and only if $\alpha>0$.
\end{proposition}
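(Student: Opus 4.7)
My plan is to treat the two implications separately, invoking Theorem~\ref{compactsupport} to obtain boundedness when $\alpha>0$, and exhibiting an explicit unbounded subset of the delay region via the inclusion \eqref{GsubL} when $\alpha=0$. Since the risk-neutral dynamics of $S$ remain GBM with drift $r$, the put price $P(t,s)$ and its Delta $P_s=-\Phi(-d_1)$ are still given by \eqref{CP}, so the drive function under exponential OU reads
\begin{equation*}
G^\alpha_{Put}(t,s) = \bigl[r-\beta(\theta-\log s)\bigr]\, s\,\Phi(-d_1) \;-\; \alpha\,\psi\bigl((m-P(t,s))^+\bigr).
\end{equation*}

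For the direction $\alpha>0 \Rightarrow$ bounded delay region, I would verify the two hypotheses of Theorem~\ref{compactsupport}. The penalty term is non-positive, so condition (i) reduces to a uniform upper bound on the first term. As $s\to 0^+$, $s(\beta\log s+r-\beta\theta)\to 0$ while $\Phi(-d_1)\le 1$, so the first term vanishes at the left endpoint. As $s\to\infty$, Mills' inequality $\Phi(-d_1)\le \phi(d_1)/d_1$ produces exponential (Gaussian) decay that dominates the polynomial-times-logarithm factor $s(\beta\log s+r-\beta\theta)$; for $s>K$ the quantity $d_1$ is increasing in $t$ on $[0,T]$, which actually strengthens the decay as $t\uparrow T$, yielding a uniform bound on $[0,T]\times\R^+$. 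Condition (ii) then follows because $P(t,s)\to 0$ uniformly in $t\in[0,T]$ as $s\to\infty$, so $G^\alpha_{Put}(t,s)\to -\alpha\psi(m)<0$ uniformly in $t$; choosing any $b\in(0,\alpha\psi(m))$ and $k$ sufficiently large forces $G^\alpha_{Put}(t,s)<-b$ on $[0,T]\times[k,\infty)$. Theorem~\ref{compactsupport} then gives the boundedness of the delay region.

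For the converse, I set $\alpha=0$, so the penalty term drops out and $G^0_{Put}(t,s)=[r-\beta(\theta-\log s)]\,s\,\Phi(-d_1)$. Since $\Phi(-d_1)>0$ for every $(t,s)\in[0,T)\times\R^+$, the bracket alone determines the sign, and $G^0_{Put}(t,s)>0$ precisely when $s>\exp(\theta-r/\beta)$. Thus $\{G^0_{Put}>0\}\supseteq [0,T)\times(\exp(\theta-r/\beta),\infty)$ is unbounded, and the general inclusion \eqref{GsubL} forces the delay region $\{L^0>0\}$ to be unbounded as well.

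The main obstacle is making the first-term estimate in condition (i) truly uniform in $t\in[0,T]$: while the Gaussian factor $\Phi(-d_1)$ suppresses any polynomial growth in $s$ for each fixed $t<T$, the scale $\sigma\sqrt{T-t}$ collapses as $t\uparrow T$, so one must split into $s<K$ and $s\ge K$ and exploit that $d_1\to+\infty$ on the latter region (while the factor $s(\beta\log s + r-\beta\theta)$ is bounded on any compact $s$-interval) to assemble a clean joint bound. A secondary issue is truncating $\psi$ to a bounded loss $\overline{\psi}$, as in the proof of Proposition~\ref{Put4}, should the unbounded growth of $\psi$ interfere with condition (i); since $P$ is uniformly bounded by $K$, however, no such truncation is actually needed here.
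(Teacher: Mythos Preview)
Your proposal is correct, and the $\alpha=0$ direction is handled exactly as in the paper, via the inclusion \eqref{GsubL} applied to the explicit half-line $\{s>\exp(\theta-r/\beta)\}$ on which $G^0_{Put}>0$. For $\alpha>0$ your route genuinely diverges from the paper's: you verify both hypotheses of Theorem~\ref{compactsupport} \emph{directly} on $G^\alpha_{Put}$ by producing a uniform-in-$t$ bound on the first term $[r-\beta(\theta-\log s)]s\Phi(-d_1)$ through a Mills'-ratio estimate, and then read off condition~(ii) from the uniform limit $G^\alpha_{Put}\to-\alpha\psi(m)$. The paper instead follows the template of Proposition~\ref{Put4}: it truncates $\psi$ to $\overline{\psi}=\min\{\psi,\hat b\}$ so that the penalty term is exactly constant for large $s$, computes $\partial\overline{G}^\alpha/\partial t\le 0$ there, and thereby reduces condition~(ii) to a one-variable check at $t=0$ before invoking Corollary~\ref{Compare_G}. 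Your observation that truncation is unnecessary here (since $P\le K$ already forces $\psi((m-P)^+)\le\psi(m)$) is correct, and the direct uniform-limit argument is arguably more transparent than the time-derivative device; the paper's truncation buys a clean formula for $\partial\overline{G}^\alpha/\partial t$ but is not logically required. One small imprecision to fix: $d_1$ is increasing in $t$ only for $s>K\exp\bigl((r+\sigma^2/2)T\bigr)$, not for all $s>K$; this is precisely the threshold the paper also uses, and it is all you need, since on the remaining compact $s$-range the first term is trivially bounded.
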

\begin{proof}
The drive function is given by
\begin{equation}
G^{\alpha}_{Put}(t,s)=[r-\beta(\theta-\log(s))]s\Phi(-d_1)-\alpha\psi((m-P(t,s))^+),
\end{equation}
  If  $\alpha=0$, then we have $\{G^{\alpha}_{Put}>0\}=\{s>\exp(\frac{r}{\beta}-\theta)\}$. By \eqref{GsubL},  the delay region contains this set, so it is unbounded. 

Now let  $\alpha>0$, and we have the limit
\begin{equation}
\lim_{s\to \infty} G^\alpha_{Put}(t,s)=-\alpha\psi(m) < 0. \label{OUPut2}
\end{equation}
Next, we fix any $\hat{b}\in (0,\alpha \psi (m))$ and define $\overline{\psi}(\ell):=\min\{\psi(\ell),\hat{b}\}$. With this, we have 
\begin{equation*}
\overline{G}^\alpha(t,s):=[r-\beta(\theta-\log(s))]s\Phi(-d_1) - \alpha \overline{\psi}\left((m-P(t,s))^+\right) \geq G^\alpha_{Put}(t,s).
\end{equation*}
 We observe that $\overline{G}^\alpha$ is bounded above and  by \eqref{OUPut2} $\lim_{s\to\infty}\overline{G}^\alpha(t,s)\to -\alpha \hat{b} <0$ for every $t\in[0,T]$. Moreover, there exists $\hat{s}>0$ such that for every $s>\hat{s}$, $\overline{\psi}((m-P(t,s))^+)=\hat{b}$. As a result, we have
\begin{equation*}
\frac{\partial \overline{G}^\alpha}{\partial t}
= (\beta(\theta-\log(s))-r)s\phi(d_1)\frac{\log(\frac{s}{K}-(r+\frac{\sigma^2}{2})(T-t)}{2\sigma(T-t)^\frac{3}{2}}\leq 0,
\end{equation*} for $s>\max\{\hat{s}, \exp(\frac{r}{\beta}-\theta), K\exp((r+\sigma^2/2)T)\}$ and $t\in[0,T]$.  Also, we notice that $\overline{G}^\alpha (0,s)\to -\alpha \hat{b}$ as $s\to-\infty$. This allows us to  choose a  $b\in(0,\alpha \hat{b})$, then there exists $k>\max\{\hat{s}, K\exp((r+\sigma^2/2)T)\}$ such that $-b>\overline{G}(0,s)>\overline{G}(t,s)$ in for $(t,s) \in [0,T]\times [k,\infty)$. Therefore,  $\overline{G}$ satisfies the assumptions of Theorem \ref{compactsupport}. By Corollary \ref{Compare_G},  we conclude  the boundedness of the delay region. 
\end{proof}

Proposition \ref{Put_OU_prop} is illustrated in Figure \ref{Put_OU}.  When  $S_t$ is low, it is expected to revert back to the  (higher)  long-term mean, and the put price will decrease.  This generates an incentive to sell at a low stock price level. If $\alpha=0$,  when $S_t$ is high, there is no reason to sell since  the put price  is very low and expected to increase. Consequently, the delay region is on top of the sell region (Figure \ref{Put_OU}, left panel).
\begin{figure}[!ht]
\centering
\includegraphics[scale=0.55]{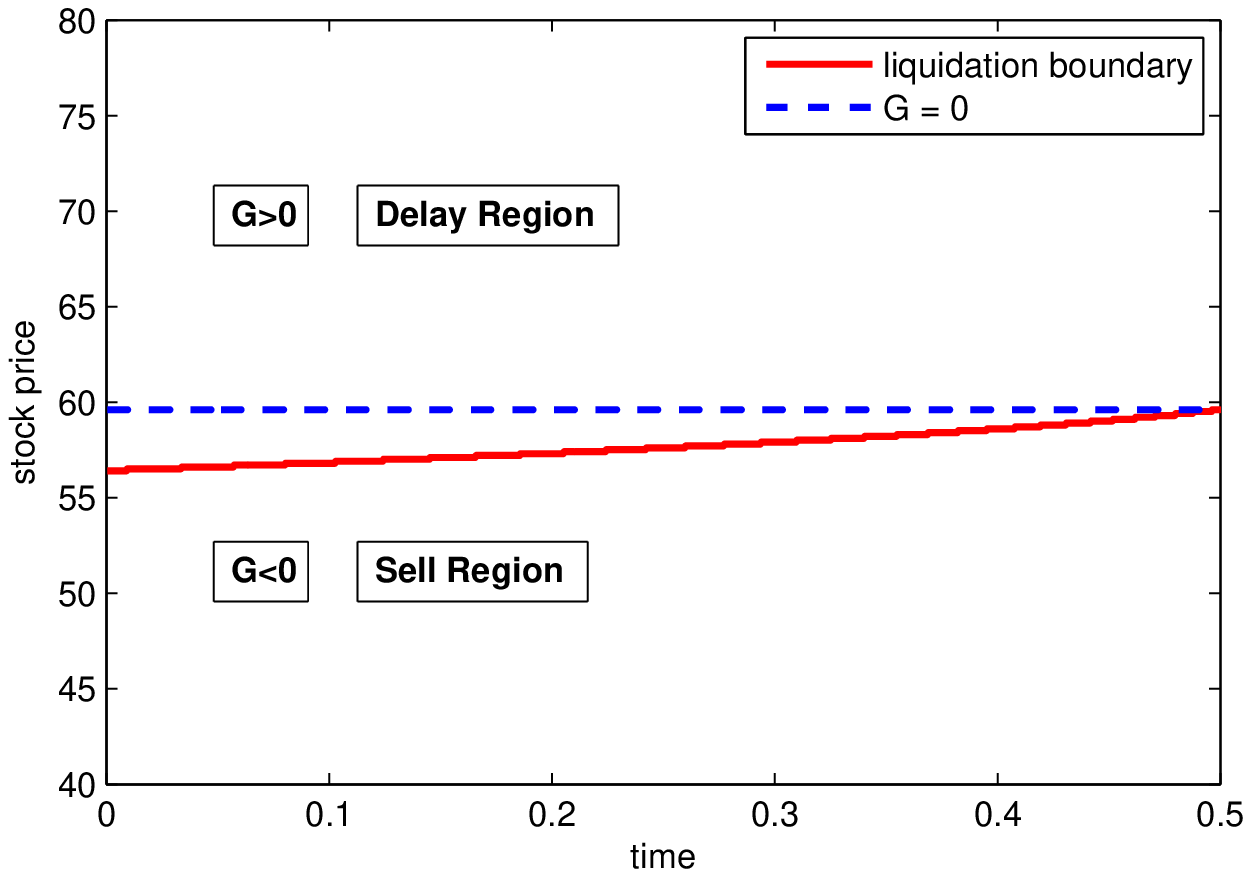}\qquad ~~
\includegraphics[scale=0.55]{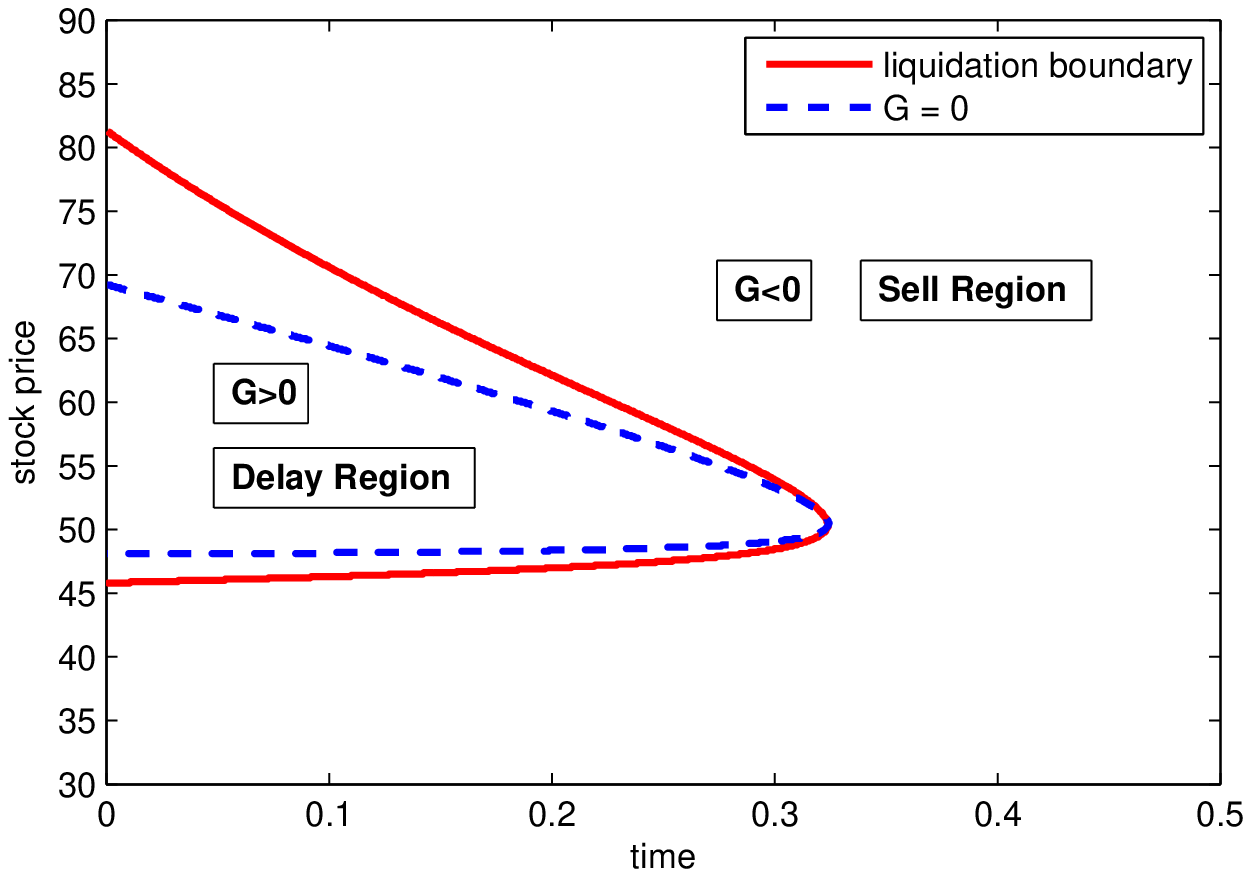}
\caption{\small{The liquidation boundary (solid) and the zero contour of $G^{\alpha}$ (dashed) for a put option under the exponential OU model. We take $\alpha=0$ and $K=50$ in the left panel, and $\alpha=0.01$ and $K=40$ in the right panel. Common parameters: $T=0.5$, $r=0.03$, $\sigma=0.3$, $\beta=4$ and $\theta=\log(60)$, $\psi(\ell)=\ell$. }}
\label{Put_OU}
\end{figure}
However, this is no longer true when we incorporate a non-zero risk penalty which reduces the value of waiting. As a result, the holder may sell the put at  high and low stock prices. In fact, if the penalization coefficient is large  and/or when the time-to-maturity is very short, the optimal liquidation premium may be zero at all stock price levels, resulting in an empty delay region (Figure \ref{Put_OU}, right panel).

\section{Quadratic Penalty} \label{sect:quadr}
As a variation to the shortfall-based penalty,  we consider a risk penalty based on the realized variance of the  option price process  from the starting time up to the liquidation time. Precisely, the investor now faces  the penalized optimal stopping problem
\begin{align*}
\tilde{J}^{\alpha}(t,s)&:= \sup_{\tau \in \mathcal{T}_{t,T}}\E_{t,s}\left\{e^{-r(\tau-t)}V(\tau,S_\tau)
							-\alpha \int_t^{\tau} e^{-r(u-t)}d[V,V]_u\right\} \\
						&= \sup_{\tau \in \mathcal{T}_{t,T}}\E_{t,s}\left\{e^{-r(\tau-t)}V(\tau,S_\tau)
							-\alpha \int_t^{\tau} e^{-r(u-t)} \sigma^2(u,S_u) S_u^2 V_s^2(u,S_u)du\right\},
\end{align*}
where $[V,V]$ denotes the quadratic variation of option price process $V$ defined in \eqref{mkt_pr}. Figure \ref{quadratic} illustrates the realized quadratic penalty associated with  a simulated call  option price path. Compared to the shortfall penalty in Figure \ref{shortfall}, the realized  quadratic penalty is increasing at all times, even when the option price is above its initial price.

\begin{figure}[ht]
\centering
\includegraphics[scale=0.65]{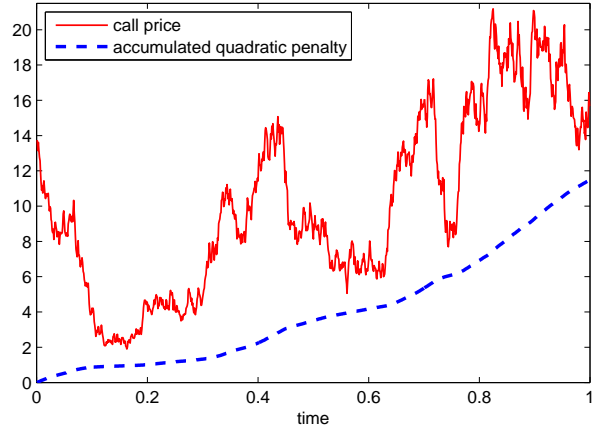}
\caption{\small{Realized quadratic penalty (dashed) based on a simulated price path (solid) of a call   under the GBM model with $\alpha =0.05$. The price path and other  parameters are the same as Figure \ref{realized_shortfall}. }}
\label{quadratic}
\end{figure}

Following  \eqref{delayed_liquidation}, we define the optimal liquidation premium by  $\tilde{L}^{\alpha}(t,s) := \tilde{J}^{\alpha}(t,s)-V(t,s)$. Again, we shall discuss  the stock or option liquidation problems  under the  GBM and exponential OU models. 

\subsection{Optimal Timing to Sell a Stock }\label{AnalyticalSolution}
We first  consider  the liquidation of a stock with the GBM  dynamics in terms of the  perpetual optimal stopping problem:
\begin{equation} \label{prob_repr_inf}
\tilde{L}^\alpha(s):=\sup_{\tau \in \mathcal{T}} \E_s\left\{ \int_0^\tau e^{-ru} \tilde{G}^{\alpha}(S_u) \,du \right\},
\end{equation}
with the  drive function  $\tilde{G}^\alpha(s):=(\mu-r)s-\alpha\sigma^2s^2$. If $\mu\leq r$, then selling immediately is always optimal since $\tilde{G}^\alpha$ is always negative. In contrast  if   $\mu>r$, then we obtain  a non-trivial closed-form solution. 

\begin{theorem}\label{perpL}
Let $\mu>r$. The value function $\tilde{L}^\alpha(s)$ in \eqref{prob_repr_inf} is  given by the formula
\begin{equation}\label{PerpGBM_solution}
\tilde{L}^\alpha(s)=\left\{\frac{\left(s^*\right)^{1-\lambda}}{2-\lambda}\,s^{\lambda}-s+B\,s^{2}\right\}\indic{s \,\leq \,s^*},
\end{equation}
where
\begin{align}
B & = \frac{\alpha\sigma^2}{2\mu+\sigma^2-r}, \quad \lambda = \frac{1}{\sigma ^{2}}\left[\, \frac{\sigma ^2}{2} - \mu +  \sqrt{\left(\frac{\sigma ^{2}}{2}-\mu \right)^{2}+2r\sigma^{2}}\,\right],\label{parameterB} \\
s^{\ast} & = \frac{1-\lambda}{( 2-\lambda)B}, \label{threshold}
\end{align}
and the stopping time $\tau^*=\inf\{t\geq0:S_t \ge s^*\}$ is optimal for \eqref{prob_repr_inf}.
\end{theorem}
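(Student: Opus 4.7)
My plan is the classical guess-and-verify approach for a perpetual optimal stopping problem. In view of the representation \eqref{prob_repr_inf}, I would look for a solution of the variational inequality
\begin{equation*}
\min\Bigl\{\,r\tilde{L}^\alpha - \mu s\,\tilde{L}^\alpha_s - \tfrac{\sigma^2 s^2}{2}\tilde{L}^\alpha_{ss} - \tilde{G}^\alpha,\ \tilde{L}^\alpha\,\Bigr\} = 0,\qquad s > 0,
\end{equation*}
together with the natural condition $\tilde{L}^\alpha(0)=0$. Since $\tilde{G}^\alpha(s)=(\mu-r)s-\alpha\sigma^2 s^2$ is positive on a bounded interval near the origin and strictly negative beyond it, I conjecture a one-sided continuation region $(0,s^*)$ and stopping region $[s^*,\infty)$ for a single free boundary $s^*$ to be determined.

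In the continuation region I solve the inhomogeneous linear ODE obtained by setting the first entry of the VI to zero. Its characteristic equation for the homogeneous part, $\tfrac{\sigma^2}{2}\lambda^2 + (\mu - \tfrac{\sigma^2}{2})\lambda - r = 0$, has two real roots of opposite sign; the positive root matches the expression for $\lambda$ in \eqref{parameterB}, and evaluating the polynomial at $1$ gives $\mu-r>0$, so $\lambda \in (0,1)$. I discard the negative-exponent homogeneous solution because it blows up at $s=0$, and I try a quadratic ansatz $-s+Bs^2$ for the particular solution; matching coefficients identifies $B$ as in \eqref{parameterB}. The candidate on $(0,s^*)$ is thus $w(s) = A\,s^\lambda - s + B s^2$, and the two unknowns $(A,s^*)$ are determined by value matching $w(s^*)=0$ and smooth pasting $w'(s^*)=0$, producing precisely the formulas \eqref{threshold} and \eqref{PerpGBM_solution}. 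To confirm that the candidate satisfies the VI globally I then verify (a) $w \ge 0$ on $(0, s^*)$, using $w(0)=w(s^*)=0$, $w'(0^+)=+\infty$ (which holds since $\lambda<1$), and the fact that $w''$ changes sign at most once, so $w$ first increases and then decreases on $(0,s^*)$; and (b) $\tilde{G}^\alpha(s) \le 0$ on $[s^*,\infty)$, which is equivalent to $s^* \ge (\mu - r)/(\alpha\sigma^2)$ and which, after using the characteristic identity $r-\mu\lambda = \tfrac{\sigma^2}{2}\lambda(\lambda-1)$ to simplify, reduces to the inequality $\tfrac{\sigma^2}{2}(1-\lambda)(2-\lambda)\ge 0$, clearly true.

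The final step is the verification. Smooth pasting implies $w\in C^1([0,\infty))$ and $w''$ is bounded on $[\varepsilon,\infty)$ for every $\varepsilon>0$, so Ito's formula (Ito--Meyer at the corner $s^*$) applied to $e^{-rt}\tilde{L}^\alpha(S_t)$ together with the VI yields
\begin{equation*}
e^{-r(\tau\wedge\tau_n)}\tilde{L}^\alpha(S_{\tau\wedge\tau_n}) - \tilde{L}^\alpha(s) \;\ge\; -\int_0^{\tau\wedge\tau_n} e^{-ru}\tilde{G}^\alpha(S_u)\,du + M_{\tau\wedge\tau_n},
\end{equation*}
for any admissible $\tau$, with equality on the event $\{t<\tau^*\}$, where $\tau_n=\inf\{t:S_t\ge n\}$ is a localizing sequence and $M$ is a local martingale with integrand bounded on $[0,\tau^*\wedge\tau_n]$. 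Taking expectations, using $\tilde{L}^\alpha \ge 0$, and sending $n\to\infty$ gives the upper bound $\tilde{L}^\alpha(s)\ge \E_s\bigl[\int_0^\tau e^{-ru}\tilde{G}^\alpha(S_u)du\bigr]$ for every $\tau\in\mathcal{T}$; choosing $\tau=\tau^*$ turns the VI into equality in the continuation region and kills the boundary term via $\tilde{L}^\alpha(S_{\tau^*})=0$, yielding equality. The hard part will be justifying the limit $n\to\infty$ in expectation: since $\tilde{L}^\alpha$ is bounded (it vanishes outside $[0,s^*]$) and $\tilde{G}^\alpha$ has at most quadratic growth while $e^{-ru}\E_s[S_u^2]$ remains controlled on $[0,\tau^*]$, the required uniform integrability follows from standard GBM moment estimates, but this is the one technical point that must be executed with care.
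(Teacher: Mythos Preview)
Your proposal is correct and follows essentially the same guess-and-verify route as the paper: conjecture a threshold-type continuation region, solve the inhomogeneous Euler ODE there, kill the singular homogeneous term via the boundary condition at $s=0$, fix the remaining constant and $s^*$ by value matching and smooth pasting, then verify via It\^o's formula and the supermartingale property. If anything you are more careful than the paper, which asserts the supermartingale property without explicitly checking that $w\ge 0$ on $(0,s^*)$ and $\tilde{G}^\alpha\le 0$ on $[s^*,\infty)$; your reduction of the latter to $\tfrac{\sigma^2}{2}(1-\lambda)(2-\lambda)\ge 0$ via the characteristic identity is a nice touch.
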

\begin{proof}
We first show that \eqref{PerpGBM_solution} is the solution of
\begin{equation*}
\min \left\{r\Lambda(s)-\mu s\Lambda'(s)-\frac{\sigma ^2 s^2}{2}\Lambda''(s)-\tilde{G}^{\alpha}(s),\,\Lambda(s)\right\}=0, \quad s>0,
\end{equation*}
with $\Lambda(0)=0$. To do this, we split $\R^+$ into two regions: $\mathcal{D}_1=(0,s^*)$ and $\mathcal{D}_2=[s^*,\infty)$ with $s^*>0$ to be determined. We conjecture that $\Lambda(s)=0$ in $\mathcal{D}_2$, and  for $s\in \mathcal{D}_1$ $\Lambda(s)$ solves
\begin{equation}\label{PerpGBM_PDE}
r\Lambda(s)-\mu s\Lambda'(s)-\frac{\sigma^2s^2}{2}\Lambda''(s)-\tilde{G}^\alpha(s)=0.
\end{equation}
By direct substitution, the general solution to equation \eqref{PerpGBM_PDE} is of the form
\begin{equation*}
\Lambda(s)=C_{1}s^{\lambda_{1}}+C_{2}s^{\lambda _{2}}-s+Bs^2,
\end{equation*}
where $C_{1}$ and $C_{2}$ are constants to be determined, $B$ is specified in \eqref{parameterB} and 
\begin{equation*}
\lambda_{k} =  \frac{1}{\sigma ^{2}}\left[\, \frac{\sigma ^2}{2} - \mu 				+(-1)^k\sqrt{\left(\mu -\frac{\sigma ^{2}}{2}\right)^{2}+2r\sigma^{2}}\,\right], \quad k \in \{1,2\}.
\end{equation*}
We apply the continuity and smooth pasting conditions at $s=0$ and $s=s^*$ to get
\begin{align}
&\lim_{s\downarrow 0}\Lambda(s)=0 ~\Rightarrow ~C_1=0,\notag\\
&\lim_{s\uparrow s^*}\Lambda(s)=0 ~\Rightarrow~ C_2(s^*)^{\lambda_2}-s^*+B(s^*)^2=0,\label{Lcont}\\
&\lim_{s\uparrow s^*}\Lambda'(s)=0~ \Rightarrow~ \lambda_2C_2(s^*)^{\lambda_2-1}-1+2Bs^*=0. \label{smoothL}
\end{align}
Solving the system of equations   \eqref{Lcont}--\eqref{smoothL} gives $C_2$ and $s^*$ as in \eqref{parameterB}-\eqref{threshold}. One can verify by substitution that $\Lambda(s)$ is indeed a  classical solution of \eqref{PerpGBM_PDE}.

By Ito's formula and \eqref{PerpGBM_PDE}, $(\Lambda(S_t))_{t\ge 0}$ is a $(\P, \Fil)$-supermartingale,  so  for every $\Fil$-stopping time $\tau$ and $n\in\N$, we have 
\begin{equation} \label{F-K}
\Lambda(s)\geq \E_{0,s} \left\{\int_0^{\tau\wedge n} e^{-ru}\tilde{G}^\alpha(S_u)du\right\}.
\end{equation}
Maximizing \eqref{F-K} over $\tau$ and $n$ yields that  $\Lambda(s)\geq \tilde{L}^\alpha(s)$ for $s \ge 0$. The reverse inequality is deduced  from  the probabilistic representation $\Lambda(s) =  \E_{0,s} \left\{\int_0^{\tau^*} e^{-ru}\tilde{G}^\alpha(S_u)du\right\}$, with the candidate stopping time   $\tau^*:=\inf\{t\geq0\,:\,S_t \ge s^*\}$. Hence, we conclude that $\Lambda(s)=\tilde{L}^\alpha(s)$ and $\tau^*$ is optimal. \end{proof}

The optimal liquidation threshold $s^{\ast}~$ in \eqref{threshold} is non-negative if and only if  $\lambda_{2}<1$, which is equivalent to the  condition $\mu>r$ in Theorem \ref{perpL}. Otherwise,  $\tilde{L}^\alpha(s)=0$ and  the optimal strategy is  to sell immediately.

In Figure \ref{PerpGBM_Fig1}, we illustrate the optimal liquidation premium   $\tilde{L}^\alpha(s)$ for various values of $\mu$ and $\sigma$.  As $\mu$ increases,  the optimal threshold as well as the optimal  liquidation premium (at all stock price levels) increase (left panel). On the other hand, a higher volatility reduces the optimal  liquidation premium at every initial stock price. We also observe that $\tilde{L}^\alpha(s)$  smooth-pastes the level 0 at the optimal threshold $s^*$, as is expected from \eqref{Lcont} and \eqref{smoothL}.

\begin{figure}[ht]
\begin{center}
\includegraphics[scale=0.55]{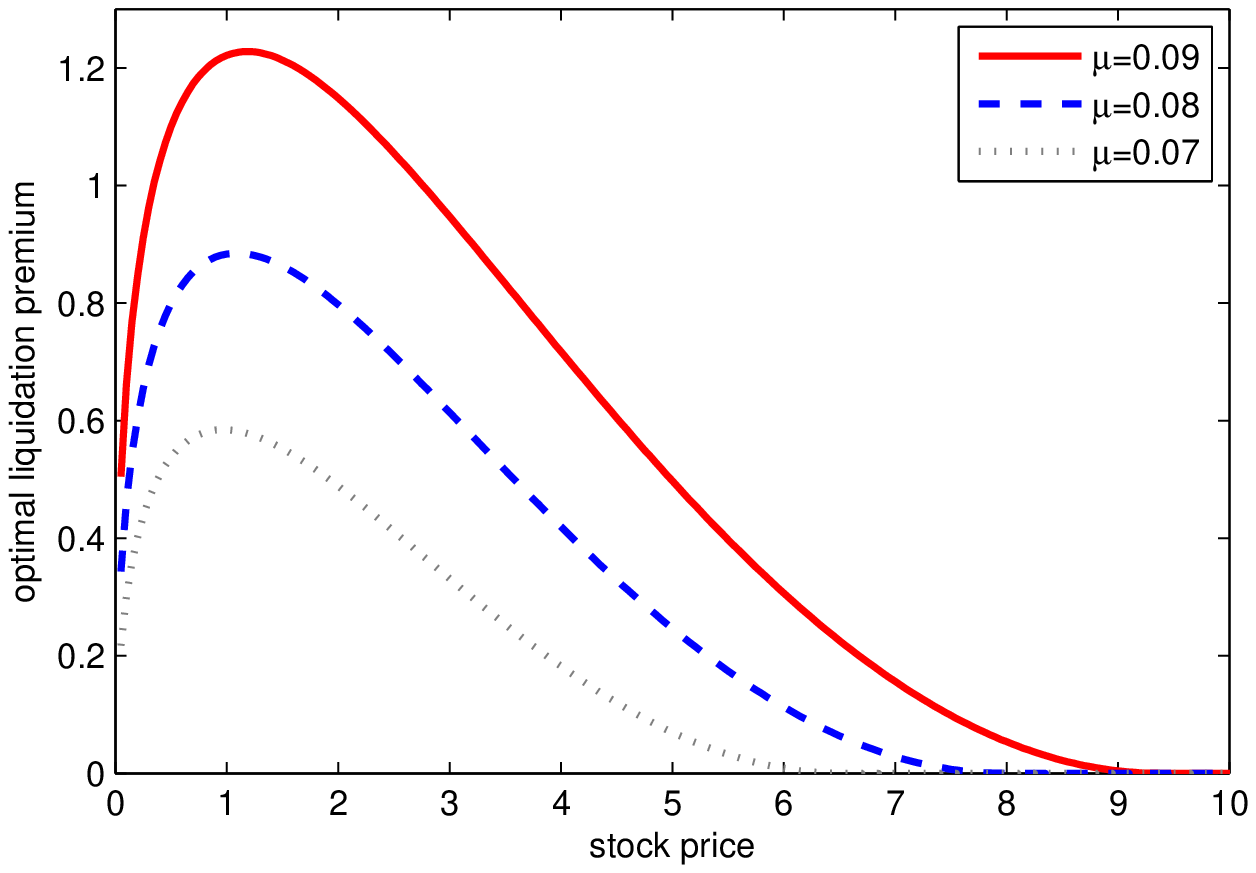}\qquad ~~
\includegraphics[scale=0.55]{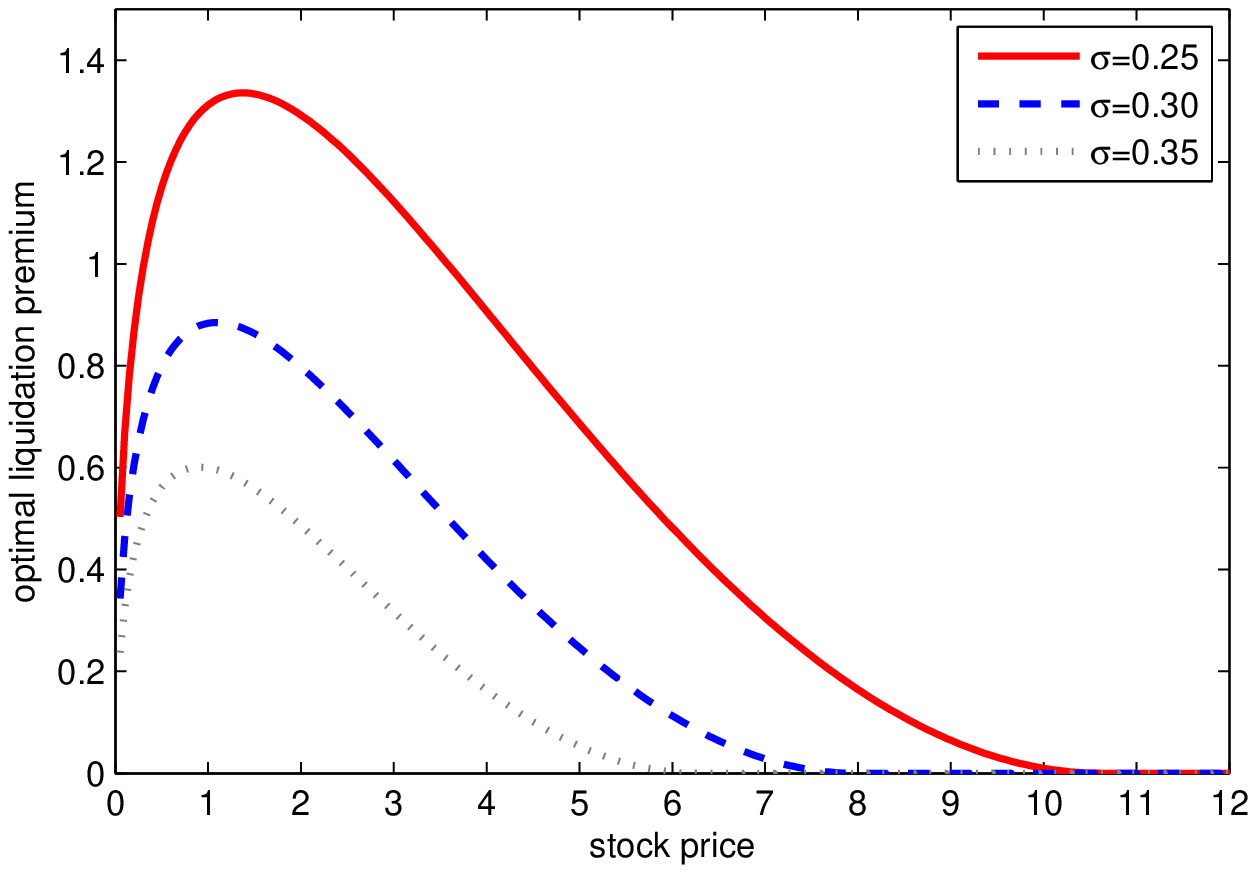}
\end{center}
\caption{\small{The optimal liquidation premium for a stock under the GBM model  for different values of $\mu$ and $\sigma$. In the left panel, we take $r=0.03$, $\sigma=0.3$ and $\alpha=0.2$, and  the liquidation threshold $s^* = 9.37, 7.97,  6.52$ for $\mu = 0.09, 0.08, 0.07$ respectively. In the right panel, we take $r=0.03$,  $\mu=0.08$,  and $\alpha=0.1$, and the liquidation threshold  $s^* = 10.63, 7.97,  6.26$ for $\sigma=0.25, 0.30, 0.35$. }}
\label{PerpGBM_Fig1}
\end{figure}

If $S$ follows the exponential OU dynamics, the drive function for liquidating a stock is 
\begin{equation}\label{G_PerpStock_OU}
\tilde{G}^\alpha(s)=[\beta(\theta-\log(s))-r-\alpha s]s.
\end{equation}
In this case, we do not have a closed-form solution. Nevertheless we observe from \eqref{G_PerpStock_OU} that the delay region is non-empty, namely, $\{\tilde{L}^\alpha>0\}\supseteq\{s<\tilde{s}\}$, where $\tilde{s}$ is determined uniquely from the equation
\begin{equation}\notag
\beta(\theta-\log(\tilde{s}))-r-\alpha \tilde{s} = 0.
\end{equation}
On the other hand, since $\tilde{G}^\alpha\to-\infty$ as $s\to\infty$, we expect intuitively that the investor will sell when the stock price is high.

\subsection{Liquidation of Options}
We now discuss some numerical  examples to demonstrate the  liquidation strategies for  European call and put options. With  strike  $K$ and maturity $T$, the drive functions are respectively given by
\begin{align}
\tilde{G}^{\alpha}_{Call}(t,s) & = s\Phi(d_1)\big(\mu-r-\alpha \sigma^2 s \Phi(d_1)\big), \label{call_quadr_gbm} \\
\tilde{G}^{\alpha}_{Put}(t,s) & = s\Phi(-d_1)\big(r-\mu-\alpha \sigma^2 s\Phi(-d_1)\big). \label{put_quadr_gbm}
\end{align}

 When $\mu\leq r$ and $\alpha>0$, the drive function $\tilde{G}^{\alpha}_{Call}(t,s)$ is negative for all $(t,s)$, so it is optimal to sell the call immediately. However, when  $\mu>r$ and $\alpha>0$, we notice from \eqref{call_quadr_gbm} that, when the stock price is sufficiently large (resp. small), the drive function of a call is negative (resp. positive). Hence, as we see in Figure \ref{fig_Quadr_GBM}, it is optimal to sell the call when the stock price is high, and the optimal liquidation boundary is lower as the penalization coefficient increases.  In contrast to the shortfall penalty,  the investor now  is  subject to a higher penalty when the stock price is high under the quadratic penalty. Consequently, the sell region is now above  the delay region, as opposed to being at the bottom in the shortfall case in Figure  \ref{Stock_Call_SF_1} (right panel). 

In the put option case,  we observe from \eqref{put_quadr_gbm} that
\begin{equation*}
\lim_{s\rightarrow 0}r-\mu-\alpha \sigma^2 s\Phi(-d_1)=\lim_{s\rightarrow \infty}r-\mu-\alpha \sigma^2 s\Phi(-d_1)=r-\mu.
\end{equation*}
Consequently, when $\mu<r$ and the stock price is sufficiently large or small, the drive function is strictly positive and it is optimal to hold the position. In contrast,    the shortfall converges to $\psi(m)>0$ as $s$ increases (see \eqref{Put2}), which means that  it is optimal to sell when the stock price is high  (see Figure \ref{Put_SF}).  We illustrate the timing strategies under quadratic penalty  in Figure \ref{fig_Quadr_GBM} (right). As expected there is a low and a high delay regions which are separated by a sell region in the middle. Also we notice that as the penalization coefficient $\alpha$ increases, the sell region expands.   

\begin{figure}[ht]
\begin{center}
\includegraphics[scale=0.55]{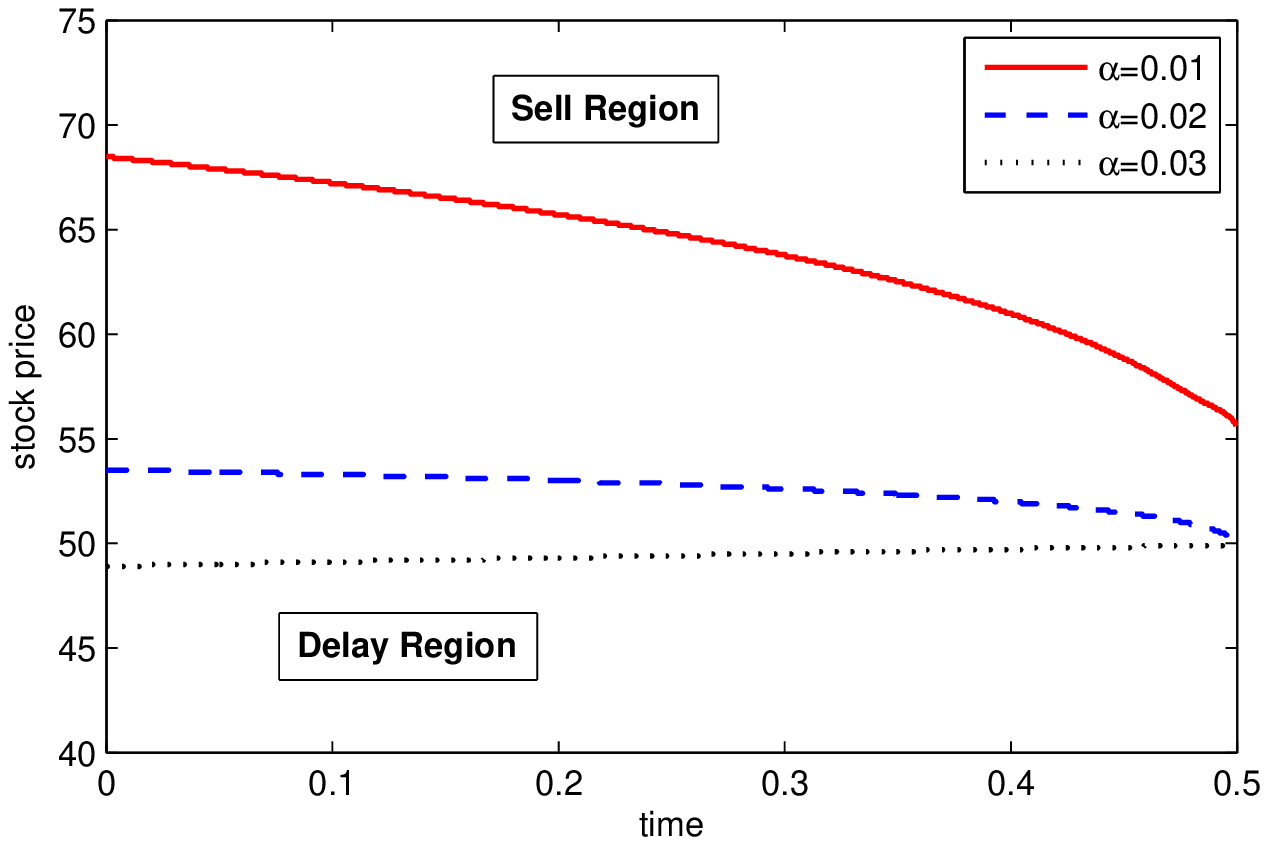}
\includegraphics[scale=0.55]{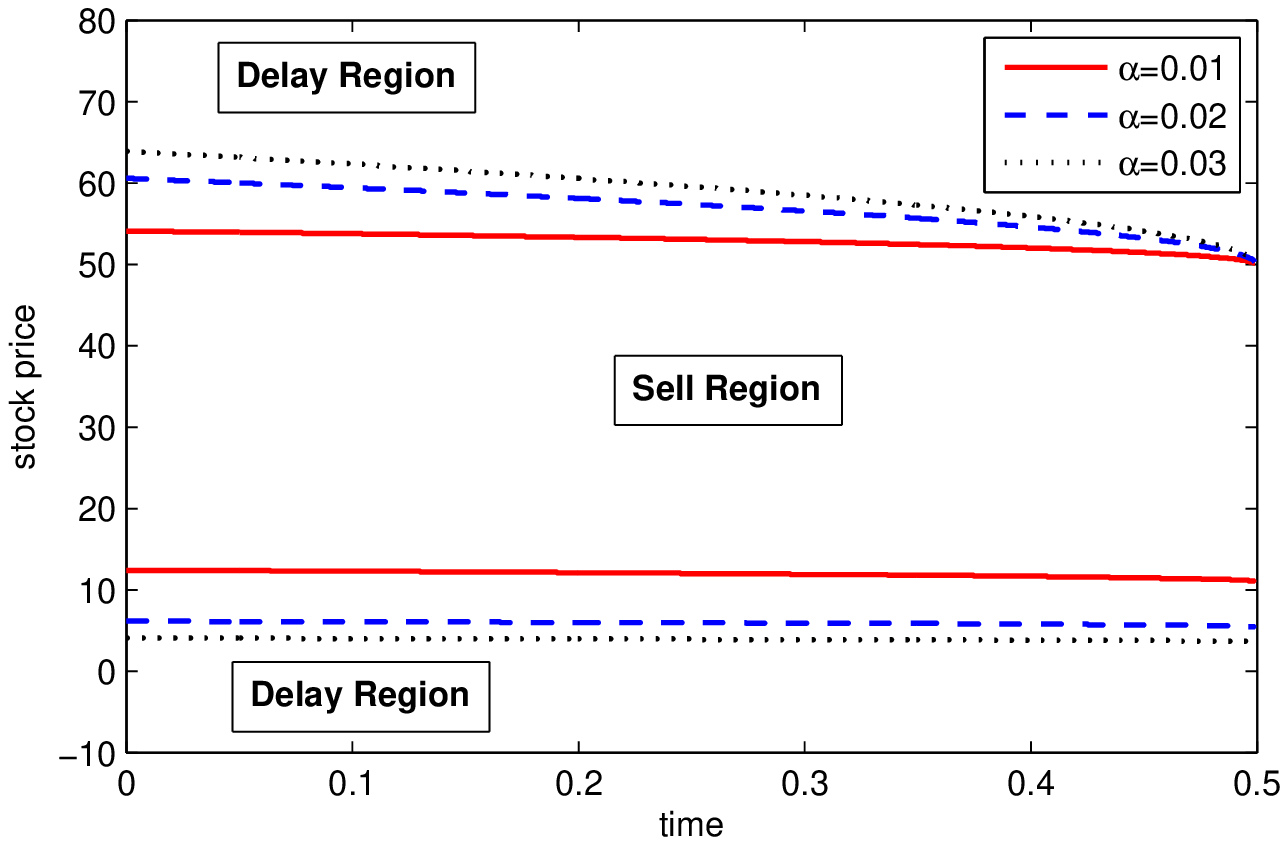}
\end{center}
\caption{\small{The liquidation boundaries for a call option (left panel) and a put option (right panel) under the GBM mode with  different values of  $\alpha$. Parameters: $T=0.5$, $r=0.03$, $\sigma=0.3$, $K=50$, $\mu=0.08$ (call) and $\mu=0.02$ (put). }}
\label{fig_Quadr_GBM}
\end{figure}

Under the exponential OU model, the drive functions for selling a call   and a put  are, respectively,
\begin{align*}
\tilde{G}_{Call}^\alpha(t,s) 
		&= s\Phi(d_1) \big( \theta - r - \beta \log s - \alpha \sigma^2 s \Phi(d_1) \big) ,\\
\tilde{G}^{\alpha}_{Put}(t,s)
		&= s\Phi(-d_1)\bigg(r-\theta + \beta \log s-\alpha \sigma^2 s\Phi(-d_1)\bigg).
\end{align*}
In Figure \ref{fig_Quadr_OU},  we can visualize the optimal liquidation premium $\tilde{L}^\alpha(t,s)$ for a call (right panel) and a put (left panel). In the call case, the delay region,  which corresponds to the area where $\tilde{L}^\alpha>0$,  is bounded. When $s$ is sufficiently high, $\tilde{L}^\alpha$ vanishes and it is optimal to sell. This is intuitive since $\lim_{s\to\infty}\tilde{G}_{Call}^\alpha(t,s)=-\infty$ and $\tilde{G}_{Call}^\alpha$ is positive for sufficiently small $s$.

In contrast, the drive function for the put $\tilde{G}^{\alpha}_{Put}(t,s)$  is negative when $s < \exp\left(\frac{\theta - r}{\beta}\right)$, for every $\alpha \ge 0$. Therefore, as Figure \ref{fig_Quadr_OU} indicates, one expects the optimal liquidation premium to vanish for small $s$, so the investor will sell when the   put price is high.  
 Compared to Figure  \ref{Put_OU} with a shortfall penalty, the investor does not sell when the underlying stock price is very high.  This is because the drive function  $\tilde{G}^{\alpha}_{Put}(t,s)$  stays positive for large $s$ (recall  \eqref{GsubL}).  As time approaches maturity, the delay liquidation premium   decreases  to  its terminal condition of value zero. 

\begin{figure}[ht]
\begin{center}
\includegraphics[scale=0.55]{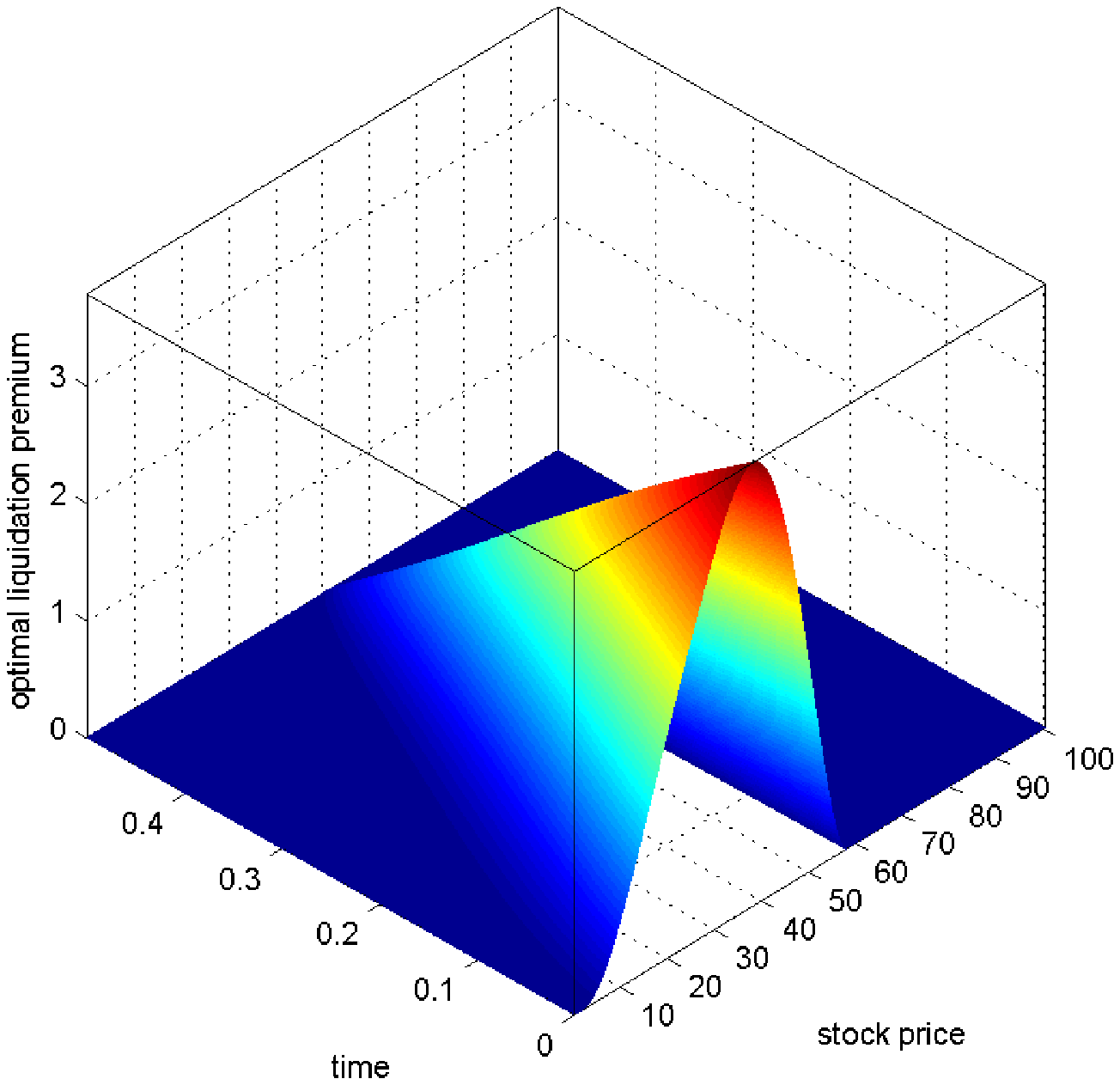}\qquad ~~
\includegraphics[scale=0.55]{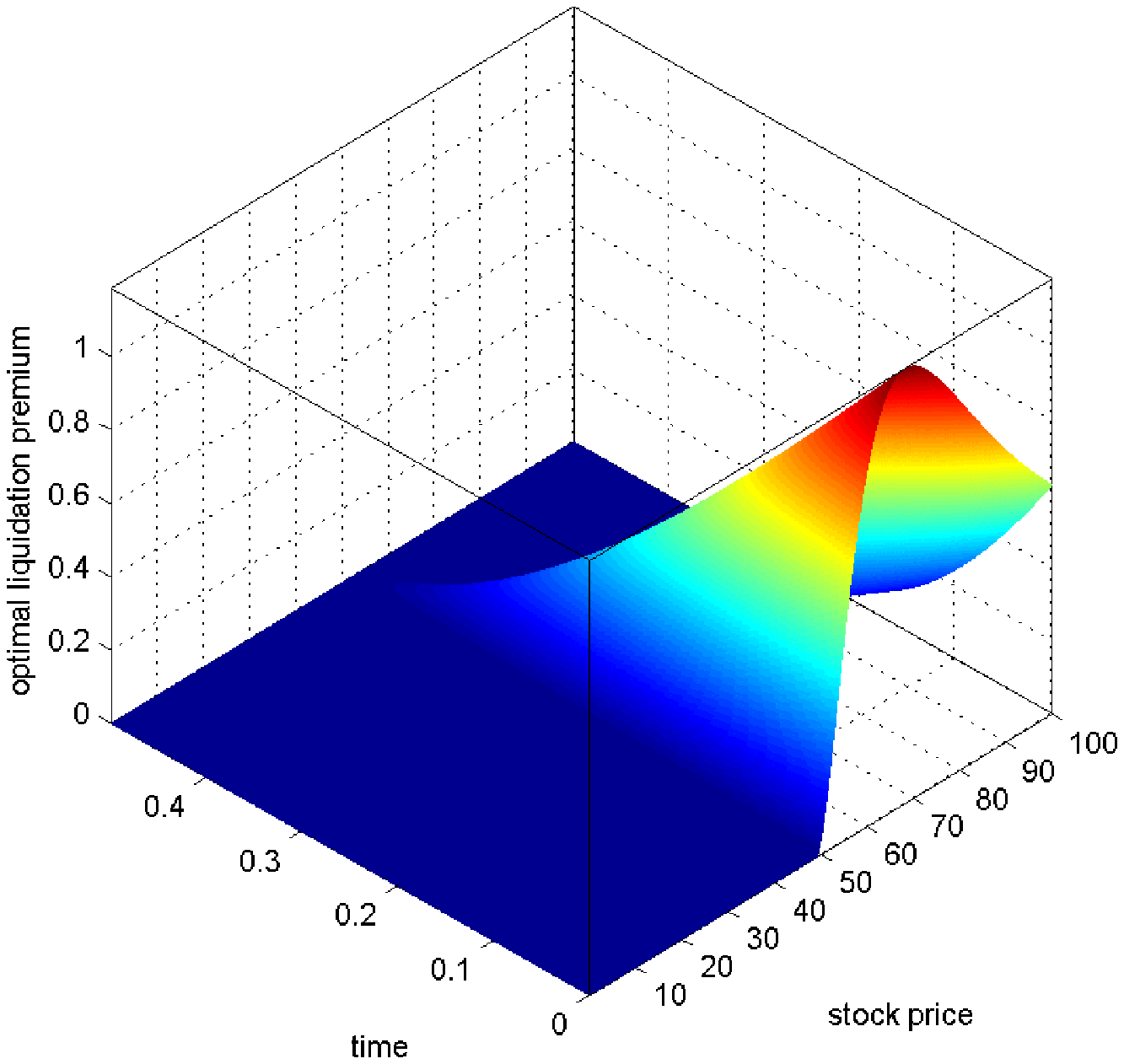}
\end{center}
\caption{\small{The optimal liquidation premium for a call option (left) and a put option (right) with exponential OU dynamics. We take $T=0.5$, $r=0.03$, $\sigma=0.3$, $K=50$, $\alpha=0.1$, $\beta=4$ and $\theta=\log(60)$. }}
\label{fig_Quadr_OU}
\end{figure}

\section{Conluding Remarks} \label{sect:concl}
In summary, we have provided a flexible mathematical model for the optimal liquidation of option positions under a path-dependent penalty. We have identified the situations where the optimal timing is trivial, and solved for non-trivial liquidation strategy via variational inequality. The  penalty type as well as the  penalization  coefficient can give rise to very different   liquidation timing. Our findings are useful for both individual and institutional investors who use options for speculative  investments or risk management purposes.

For future research, a natural direction is to  adapt our model to the  problem of  sequentially buying and selling  an option.  Moreover, one can consider applying the methodology to  derivatives other than equity options. For example, we refer to  \cite{LeungLiu2012} for a recent study on   the liquidation  of credit derivatives with  pricing measure discrepancy but without risk penalty. It would be both mathematically interesting  and challenging to study  option  liquidation under incomplete markets. On the other hand, our model can be extended to markets with liquidity cost and price impact (see e.g. \cite{Almgren2003,Lorenz11,Schied2009}).  Finally, the  path-dependent risk penalization can also be incorporated   to  dynamic  portfolio optimization problems to account for adverse performance    during  the  investment horizon.

\section{Strong Solution to the Inhomogeneous Variational Inequality}\label{Sect-ExUn}
In this section, we   follow the terminology and procedures in   \cite{Bensoussan78}, and establish   the existence and uniqueness of a strong solution  to the variational inequality \eqref{L_general_VI} under conditions that are applicable to the GBM and exponential OU models.

\noindent \textbf{Preliminaries.}  We express prices in logarithmic scale by setting $X_t=\log(S_t)$. Equation \eqref{SDE_S} then becomes
\begin{equation} \label{SDE_X}
dX_t = \eta(t,X_t)dt + \kappa(t,X_t)dW_t,
\end{equation}
for some functions $\kappa(t,x)$ and $\eta(t,x)$. Next, we define the operator $\mathcal{A}$ by
\begin{align}
\mathcal{A}[\cdot] 
&= -\frac{\kappa^2(t,x)}{2} \frac{\partial^2\, \cdot}{\partial x^2}-\eta(t,x) \frac{\partial\, \cdot}{\partial x} + r \,\cdot \notag \\
&= -\frac{\partial\,\cdot}{\partial x}\left(a_2(t,x)\frac{\partial \,\cdot}{\partial x}\right) + a_1(t,x)\frac{\partial \,\cdot}{\partial x} + r\,\cdot  \,, \label{divergenge_form}
\end{align}
where
\begin{equation*}
a_1(t,x) = \frac{1}{2}\frac{\partial }{\partial x}\kappa^2(t,x)-\eta(t,x), \qquad
a_2(t,x) = \frac{\kappa^2(t,x)}{2}.
\end{equation*} In term of log-prices, we express  the drive function as  $g(t,x)=G^\alpha(t,e^x)$ and the optimal liquidation premium as $u(t,x) = L(t,e^x)$.  Throughout, we denote the domain  $\mathcal{D}=[0,T]\times \R$. In order to solve the VI \eqref{L_general_VI}, it is equivalent to solve the VI problem:
\begin{equation} \label{u_general_VI}
\begin{cases}
-\frac{\partial u}{\partial t} + \mathcal{A}[u] - g(t,x) \geq 0, \,u(t,x)\geq0, \quad (t,x) \in \mathcal{D}, \\
\left(-\frac{\partial u}{\partial t} + \mathcal{A}[u] - g(t,x)\right) u = 0, \quad (t,x) \in \mathcal{D}, \\
u(T,x)=0, \quad x \in \R.
\end{cases}
\end{equation}

We  describe an appropriate class of solutions for \eqref{u_general_VI} in a suitable Sobolev space and prove that such a solution exists and is unique.   First, let us  define, for   $\lambda(x)=\exp(-n|x|)$, $n\in \mathbb{N}$,
\begin{align*}
\mathcal{L}^2_{\lambda}(\R)   & = \{v\,|\,\sqrt{\lambda} v \in L^2(\R)\}, \\
\mathcal{H}^1_{\lambda}(\R)   & = \{v\in L^2_{\lambda}(\R)\,|\,\frac{\partial v}{\partial x} \in L^2_{\lambda}(\R)\},\\
\mathcal{H}^1_{0,\lambda}(\R) & = \{v\in H^1_{\lambda}(\R)\,|\,\lim_{|x|\to\infty}v(x)=0\}.
\end{align*} These are Hilbert spaces when endowed with the following inner products
\begin{align*}
(f,g)_{L^2} & = \int_{\R}\lambda fgdx, \,\quad  f,g \in L^2_{\lambda}(\R),\\
(f,g)_{H^1} & = \int_{\R}\lambda fgdx + \int_{\R}\lambda \frac{\partial f}{\partial x}\frac{\partial g}{\partial x}dx,\quad  f,g \in H^1_{\lambda}(\R).
\end{align*}

We denote by $\mathcal{H}^1_{c,\lambda}(\R)$ the set of functions $w \in \mathcal{H}^1_{\lambda}(\R)$ with compact support. For $u \in \mathcal{H}^1_{0,\lambda}(\R)$, $w \in \mathcal{H}^1_{c,\lambda}(\R)$, we define the operator
\begin{equation*}
\mathcal{I}_{\lambda}(t,u,w) = \int_{\R} a_2(t,x)\left(\lambda\frac{\partial u}{\partial x}\frac{\partial w}{\partial x}
  + w\frac{\partial u}{\partial x}\frac{\partial \lambda}{\partial x}\right)dx
  + \int_{\R} a_1(t,x)\lambda\frac{\partial u}{\partial x}w dx +r\int_{\R}\lambda uw dx.
\end{equation*}
We can assume without loss of generality  \citep[Sect. 3.2.17]{Bensoussan78} that $\mathcal{I}_{\lambda}$ is coercive on $\mathcal{H}^1_{c,\lambda}(\R)$, i.e.
\begin{equation*}
\mathcal{I}_{\lambda}(t,w,w)\geq\alpha |\!|w|\!|_{H^1} \quad \forall \, w \in \mathcal{H}^1_{c,\lambda}(\R), \, \alpha>0.
\end{equation*}
Integrating by parts allows us to extend $\mathcal{I}_{\lambda}$ to a bilinear form on the whole space $\mathcal{H}^1_{0,\lambda}(\R)$. In particular, we set
\begin{equation*}
\mathcal{I}_{\lambda}(t,u,v) = \int_{\R}\left[a_2(t,x)\lambda\frac{\partial u}{\partial x}\frac{\partial v}{\partial x}
  + a_2(t,x)\frac{\partial \lambda}{\partial x}\frac{\partial u}{\partial x}v\right]dx
  + \int_{\R} \left(r - \frac{1}{2}\frac{\partial a_1}{\partial x}
  - \frac{1}{2\lambda}a_1\frac{\partial \lambda}{\partial x} \right) \lambda uvdx.
\end{equation*}
with $u,v \in \mathcal{H}^1_{0,\lambda}(\R)$.

Following Chapter 5.9.2 of \cite{Evans1998} and Chapter 2.6 of \cite{Bensoussan78}, we define the  space $\mathcal{L}^p(0,T;X)$ consisting  of all strongly measurable functions $\chi:[0,T]\to X$ with
\begin{equation*}
|\!|\chi|\!|_{\mathcal{L}^p(0,T;X)}=\left(\int_0^T|\!|\chi(t)|\!|^p_X\,dt\right)^{1/p}, \quad 1\leq p<\infty,
\end{equation*}
and for $p=\infty$,
\begin{equation*}
|\!|\chi|\!|_{\mathcal{L}^{\infty}(0,T;X)}=\esssup_{0\leq t \leq T}|\!|\chi(t)|\!|_X,
\end{equation*}
For $\chi\in \mathcal{L}^1(0,T;X)$, we say $\nu\in \mathcal{L}^1(0,T;X)$ is the weak derivative of $\chi$, denoted by $\nu= \frac{\partial \chi}{\partial t}$, if
\begin{equation*}
\int_0^T\frac{\partial w}{\partial t}\chi(t) dt = -\int_0^T w(t)\nu(t)dt, \quad \forall \,w \in C_c^\infty([0,T]).
\end{equation*}
The Sobolev space $\mathcal{H}^1(0,T;X)$ consists of all functions $\chi \in \mathcal{L}^2(0,T;X)$ such that the weak derivative exists and belongs to $\mathcal{L}^2(0,T;X)$. Furthermore, we set
\begin{equation} \label{H1norm}
|\!|\chi|\!|_{\mathcal{H}^1(0,T;X)}=\left(\int_0^T|\!|\chi(t)|\!|^2_X+|\!|\frac{\partial }{\partial t}\chi(t)|\!|^2_X\,dt\right)^{1/2},
\end{equation}
which makes $\mathcal{H}^1(0,T;X)$ an Hilbert space (see Chapter 5.9.2 in \cite{Evans1998}).

\vspace{5 mm}

\noindent  \textbf{Main Results.}
\begin{definition}
A function $u:\mathcal{D}\to\R$ is a \emph{strong solution} of problem \eqref{u_general_VI} if, $\forall$ $v \in \mathcal{H}^1_{\lambda}(\R)$, $v\geq0$ a.e., the following conditions are satisfied:
\begin{equation}\label{L_general_VI_strong}
\begin{cases}
& u\in \mathcal{L}^2(0,T;\mathcal{H}^1_{0,\lambda}(\R)), \, \frac{\partial u}{\partial t} \in \mathcal{L}^2(0,T;\mathcal{L}^2_{\lambda}(\R)), \\
& -\left(\frac{\partial u}{\partial t},v-u\right)-\mathcal{I}_{\lambda}(t;u,v-u)\leq(g,v-u), \\
& u\geq0 \text{ a.e. in } \mathcal{D}, \\
& u(T,x)=0, \, x\in\R.
\end{cases}
\end{equation}
\end{definition}
We shall impose the following conditions on $a_2$, $a_1$, $g$.\\
\textbf{Assumption A}. $a_2$, $\frac{\partial a_2}{\partial t}$ and $\frac{\partial a_1}{\partial x}\in\mathcal{L}^{\infty}(\mathcal{D})$; $a_1$ and $\frac{\partial a_1}{\partial t} \in C^0(\overline{\mathcal{D}})$; $g \in \mathcal{H}^1(0,T;\mathcal{L}^2_{\lambda}(\R))$.

\begin{theorem}\label{ExistenceStrongSolution}
Under Assumption {A}, the variational inequality in \eqref{L_general_VI_strong} has a unique strong solution.
\end{theorem}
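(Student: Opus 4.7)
The plan is to implement the penalization scheme of \cite{Bensoussan78}. For each $\epsilon > 0$, I would introduce the penalized parabolic equation
\begin{equation*}
-\frac{\partial u_\epsilon}{\partial t} + \mathcal{A}[u_\epsilon] - \frac{1}{\epsilon}(u_\epsilon)^{-} = g, \qquad u_\epsilon(T,x) = 0,
\end{equation*}
where $(y)^{-} := \max(-y,0)$. Assumption~A ensures the coefficients $a_1, a_2$ have the requisite regularity and that the bilinear form $\mathcal{I}_\lambda$ is coercive on $\mathcal{H}^1_{0,\lambda}(\R)$, while the penalty nonlinearity $z \mapsto -\frac{1}{\epsilon}z^{-}$ is monotone and globally Lipschitz. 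Standard evolution equation theory (J.-L.~Lions) then yields a unique solution $u_\epsilon \in \mathcal{L}^2(0,T;\mathcal{H}^1_{0,\lambda}(\R))$ with $\partial_t u_\epsilon \in \mathcal{L}^2(0,T;\mathcal{L}^2_\lambda(\R))$.

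The core of the proof is to derive a priori bounds on $u_\epsilon$ uniform in $\epsilon$. Testing the penalized equation against $u_\epsilon$ in $\mathcal{L}^2_\lambda(\R)$ and integrating over $[t,T]$, the coercivity of $\mathcal{I}_\lambda$ combined with the sign identity $-(u_\epsilon)^{-}\,u_\epsilon = ((u_\epsilon)^{-})^2 \geq 0$ gives an energy estimate
\begin{equation*}
\sup_{t \in [0,T]}\|u_\epsilon(t)\|_{\mathcal{L}^2_\lambda}^2 + \|u_\epsilon\|_{\mathcal{L}^2(0,T;\mathcal{H}^1_{0,\lambda})}^2 + \frac{1}{\epsilon}\|(u_\epsilon)^{-}\|_{\mathcal{L}^2(0,T;\mathcal{L}^2_\lambda)}^2 \leq C\,\|g\|_{\mathcal{L}^2(0,T;\mathcal{L}^2_\lambda)}^2,
\end{equation*}
with $C$ independent of $\epsilon$. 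To bound $\partial_t u_\epsilon$, I would differentiate the equation in $t$ (which Assumption~A permits since $\partial_t a_1, \partial_t a_2, \partial_t g$ have the required integrability) and test against $\partial_t u_\epsilon$; the monotone derivative of the penalty contributes a nonnegative term, yielding $\|\partial_t u_\epsilon\|_{\mathcal{L}^2(0,T;\mathcal{L}^2_\lambda)} \leq C$.

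These estimates let me extract a weakly convergent subsequence $u_\epsilon \rightharpoonup u$ in $\mathcal{L}^2(0,T;\mathcal{H}^1_{0,\lambda})$ with $\partial_t u_\epsilon \rightharpoonup \partial_t u$ in $\mathcal{L}^2(0,T;\mathcal{L}^2_\lambda)$; the bound on $(u_\epsilon)^{-}$ forces $(u_\epsilon)^{-} \to 0$ strongly, so $u \geq 0$ a.e. For any nonnegative $v \in \mathcal{H}^1_\lambda(\R)$, the penalty term in the weak formulation satisfies
\begin{equation*}
-\frac{1}{\epsilon}\bigl((u_\epsilon)^{-},\, v - u_\epsilon\bigr)_\lambda = -\frac{1}{\epsilon}\bigl((u_\epsilon)^{-},\, v\bigr)_\lambda - \frac{1}{\epsilon}\|(u_\epsilon)^{-}\|_{\mathcal{L}^2_\lambda}^2 \leq 0,
\end{equation*}
which allows passage to the weak limit compatibly with the inequality \eqref{L_general_VI_strong}; the terminal condition $u(T) = 0$ is preserved by continuity in time. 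For uniqueness, if $u^{(1)}, u^{(2)}$ are two strong solutions, substituting $v = u^{(2)}$ in the VI for $u^{(1)}$ and $v = u^{(1)}$ in the VI for $u^{(2)}$ and summing yields an energy inequality for $w = u^{(1)} - u^{(2)}$ that, combined with $w(T) = 0$ and the coercivity of $\mathcal{I}_\lambda$, forces $w \equiv 0$ via a backward Gr\"onwall argument.

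The principal obstacle is the derivation of the uniform a priori estimates in the weighted Sobolev setting: the integration by parts that extends $\mathcal{I}_\lambda$ from $\mathcal{H}^1_{c,\lambda}(\R)$ to $\mathcal{H}^1_{0,\lambda}(\R)$ generates lower-order cross terms involving $\partial_x \lambda / \lambda = -n\,\mathrm{sign}(x)$, and absorbing these into the principal part uniformly in $\epsilon$ requires precisely the $\mathcal{L}^\infty$ bounds on $a_2$, $\partial_t a_2$, $\partial_x a_1$ and the continuity of $a_1$, $\partial_t a_1$ on $\overline{\mathcal{D}}$ that Assumption~A supplies. This is why Assumption~A is tailored exactly to these coefficients, and it is also what makes the theorem apply to both the GBM and exponential OU specifications via the log-transform \eqref{SDE_X}.
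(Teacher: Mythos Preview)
Your penalization outline is essentially correct and is precisely the machinery that underlies the result in \cite{Bensoussan78}; the energy estimates, the monotonicity of the penalty, the weak-limit passage, and the uniqueness-by-symmetry argument are all standard and go through in the weighted setting once Assumption~A is in force. The paper, however, takes a much shorter route: it simply identifies Assumption~A with the hypotheses (2.223), (2.224), (2.238)--(2.240) of \cite[Chap.~3]{Bensoussan78} (together with the choice $\lambda(x)=e^{-n|x|}$ per their Remark~2.24) and then invokes their Theorem~2.21 as a black box. So the difference is one of packaging rather than content: you reconstruct the penalization proof that Bensoussan and Lions carry out, whereas the paper just cites the finished theorem. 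Your version is more self-contained and makes transparent exactly where each clause of Assumption~A is consumed (the $\partial_t a_i$ regularity for the time-differentiated estimate, the $\mathcal{L}^\infty$ bounds for absorbing the weight-induced cross terms, and $g\in\mathcal{H}^1(0,T;\mathcal{L}^2_\lambda)$ for bounding $\partial_t u_\epsilon$); the paper's version is terser and defers those details to the reference.
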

\begin{proof}
Assumption A is equivalent to assumptions (2.223), (2.224), (2.238), (2.239), (2.240)   of \citep[Chap. 3]{Bensoussan78}, and we also follow their Remark 2.24 to use  $\lambda(x)=e^{-n|x|}$ for some arbitrarily fixed  $n>0$ in our definition of Hilbert spaces. In turn, we can apply their Theorem 2.21 and our statement follows.
\end{proof}

Our main objective is to verify that  Assumption {A} is satisfied for our applications so that  Theorem \ref{ExistenceStrongSolution} applies  to ensure the existence of a unique strong solution to the VI \eqref{L_general_VI}.
To see this, we first write down  the operators associated with the log-price $X_t=\log(S_t)$ under   the GBM  and exponential OU models, namely, 
\begin{equation} \notag
\mathcal{A}[v]=\frac{\sigma^2}{2}\frac{\partial^2 v}{\partial x^2}+\mu\frac{\partial v}{\partial x}, \qquad
\mathcal{A}[v]=\frac{\sigma^2}{2}\frac{\partial^2 v}{\partial x^2}+(\hat{\theta}-\beta x)\frac{\partial v}{\partial x}. 
\end{equation} Therefore, $a_2 = \sigma^2/2$ is constant  and $a_1$ is an affine function in $x$  for both cases, so these coefficients  meet  the requirements in Assumption A. 

It remains  to verify that  the drive function $g(t,x) =G(t,e^x) \in \mathcal{H}^1(0,T;\mathcal{L}^2_{\lambda}(\R))$. In view of \eqref{H1norm},  we want to show that there exists $n>0$ such that
\begin{align*}
\int_0^T|\!|g(t,x)|\!|_{\mathcal{L}^2_{\lambda}(\R)}^2dt
	&=\int_0^T\int_{\R}\left(g(t,x)e^{-\frac{n}{2}|x|}\right)^2dx\,dt,  \quad \text{and}\\
\int_0^T|\!|\frac{\partial g}{\partial t}(t,x)|\!|_{\mathcal{L}^2_{\lambda}(\R)}^2dt
	&=\int_0^T\int_{\R}\left(\frac{\partial g}{\partial t}(t,x)e^{-\frac{n}{2}|x|}\right)^2dx\,dt \\
\end{align*}
are finite, where  
\begin{align*}
g(t,x)&= (r-\mu(t,e^x))e^xV_s(t,e^x)-\alpha\psi((m-V(t,e^x))^+),\\
\frac{\partial g}{\partial t}(t,x)&=(r-\mu(t,e^x))e^xV_{ts}(t,e^x)+\alpha\psi'((m-V(t,e^x))^+)V_t(t,e^x)\indic{m>V(t,e^x)}.
\end{align*} Here,   the subscripts of $V$ indicate the partial derivatives in $t$ and $s$.  Recall the drift functions  $\mu(t,e^x)=\mu$  under GBM and $\mu(t,e^x)=\beta(\theta-x)$ under exponential OU models. We notice that, in both cases,  the drift does not depend on $t$, so we  just write $\mu(e^x)$. Also, we observe that $\psi$ and $\psi'$ are increasing, and $\psi'(\ell)$ is bounded for any finite $\ell$.  For both call and put options,  there exist positive constants $h_1,\,q_1,\,h_2,\,q_2$ such that $|V_s(t,e^x)|\leq 1, |V_t(t,e^x)|\leq h_1e^x+q_1,\,|V_{st}(t,e^x)|\leq h_2e^x+q_2$. Together, these   imply the time-independent bounds for both models:
\begin{align*} 
|g(t,x)|&\leq |r-\mu(e^x)|e^x+\alpha\psi(m)=\emph{o}({e^{2|x|}}),  \\
|\frac{\partial g}{\partial t}(t,x)|&\leq |r-\mu(e^x)|(h_1e^x+q_1)e^x+\alpha\psi'(m)(h_2e^x+q_2)=\emph{o}(e^{2|x|}).
\end{align*} 
This implies that by choosing   $n>4$, we have
\begin{align*}
\int_0^T|\!|g(t,x)|\!|_{\mathcal{L}^2_{\lambda}(\R)}^2dt &\leq \int_0^T\big|\!\big| |r-\mu(e^x)|e^x+\alpha\psi(m)\big|\!\big|_{\mathcal{L}^2_{\lambda}(\R)}^2dt<\infty,\\
\int_0^T|\!|\frac{\partial g}{\partial t}(t,x)|\!|_{\mathcal{L}^2_{\lambda}(\R)}^2dt&\leq \int_0^T\big|\!\big| |r-\mu(e^x)|(h_1e^x+q_1)e^x+\alpha\psi'(m)(h_2e^x+q_2)\big|\!\big|_{\mathcal{L}^2_{\lambda}(\R)}^2dt<\infty.
\end{align*}
Hence, we conclude that  $g\in \mathcal{H}^1(0,T;\mathcal{L}^2_{\lambda}(\R))$ for both puts and calls under the GBM and exponential OU models, and Assumption A is satisfied. 

\vspace{5 mm}
As a final remark, Sect. 3.4 of  \cite{Bensoussan78} also provides the probabilistic representation of the  strong solution $u(t,x)$ of  the VI \eqref{u_general_VI}, given by \begin{equation} \label{log_prices}
u(t,x)=\sup_{\tau \in \mathcal{T}_{t,T}} \E_{t,x} \left\{ \int_t^\tau e^{-r(u-t)} g(u,X_u) \,du\right\}, 
\end{equation}
where $dX_u=\eta(u,X_u)du+\kappa(u,X_u)dW_u$ and $X_t=x$. By  the definition $L(t,e^x) = u(t,x)$,  the optimal stopping problem in \eqref{log_prices} resembles  that  for the   optimal liquidation premium in  \eqref{L_G}.

\appendix
\section{Novikov Condition} \label{appendix_A}
Under the GBM model, the Sharpe ratio $\lambda=\frac{\mu-r}{\sigma}$ is constant, so the Novikov condition is clearly met. Let us consider the exponential OU case. By Corollary 3.5.14 of  \cite{KaratzasShreve91}, it suffices to show  that for every $t\in[0,T]$,
\begin{equation} \label{Novikov}
\exists \, \varepsilon>0 \text{ s.t. } \E\left\{\exp\left(\int_t^{t+\varepsilon}\frac{1}{2}\lambda^2(u,S_u)du\right)\right\}<\infty.
\end{equation}
Using Jensen's inequality and Tonelli's Theorem, we have, for every $\varepsilon>0$,
\begin{align*}
\E\left\{\exp\left(\int_t^{t+\varepsilon}\frac{1}{2}\lambda(u,S_u)^2du\right)\right\}
& \leq 	\frac{1}{\varepsilon}\int_t^{t+\varepsilon}\E\left\{e^{\frac{\varepsilon}{2}\lambda(u,S_u)^2}\right\}du \\
& =		\frac{1}{\varepsilon}\int_t^{t+\varepsilon}\E\left\{e^{\frac{\varepsilon}{2}\left\{K_1[\theta-\log(S_u)]^2
		+K_2+K_3[\theta-\log(S_u)]\right\}}\right\}du.
\end{align*}
for some real constants $K_1$, $K_2$ and $K_3$. Moreover, Cauchy-Schwarz inequality implies that
\begin{equation}
\E\left\{e^{\frac{\varepsilon}{2}\left\{K_1[\theta-\log(S_u)]^2
		+K_2+K_3[\theta-\log(S_u)]\right\}}\right\}
 \leq 	\sqrt{\E\left\{e^{\varepsilon\left\{K_1[\theta-\log(S_u)]^2+K_2\right\}}\right\}
		\E\left\{e^{\varepsilon K_3(\theta-\log(S_u))}\right\}}. \label{ineq1}
\end{equation}
Since $\log(S_u)$ has the normal distribution, there exists $\varepsilon >0$ s.t. $\E\left\{\exp\left(\varepsilon K_1[\theta-\log(S_u)]^2+K_2\right)\right\}$ and $\E\left\{\exp\left(\varepsilon K_3[\theta-\log(S_u)]\right)\right\}$ in \eqref{ineq1} are finite. Hence, condition \eqref{Novikov} holds.

\begin{small}
\bibliographystyle{apa}
\singlespacing
\bibliography{mybib2}
\end{small}
\end{document}